\newcommand*{\LONG}{}%
\def\squarebox#1{\hbox to #1{\hfill\vbox to #1{\vfill}}}
\newtheorem{theorem}{Theorem}
\newtheorem{lemma}{Lemma}
\newtheorem{definition}{Definition}
\newtheorem{claim}{Claim}
\newtheorem{corollary}{Corollary}
\newcommand{\namedref}[2]{\hyperref[#2]{#1~\ref*{#2}}}
\newcommand{\sectionref}[1]{\namedref{Section}{#1}}
\newcommand{\algref}[1]{\namedref{Algorithm}{#1}}
\newcommand{\lemmaref}[1]{\namedref{Lemma}{#1}}
\newcommand{\corollaryref}[1]{\namedref{Corollary}{#1}}
\newcommand{\lref}[1]{\namedref{Line}{#1}}
\def\tbasyn{{$\B$-Byzantine resilient asynchronous system}\xspace}
\def\toasyn{{$\B$-resilient asynchronous system}\xspace}
\def\ttbasyn{{$t$-Byzantine resilient asynchronous system}\xspace}
\def\ttoasyn{{$t$-resilient asynchronous system}\xspace}
\newcommand{\G}{\mathcal{G}}
\newcommand{\B}{\mathcal{B}}
\newcommand{\BIMObb}{\mbox{BI-MO$(B_{initial})$}\xspace}
\newcommand{\BISYNt}{\mbox{BI-Synch}\xspace}
\newcommand{\cosend}{\mbox{\sc co\_send}\xspace}
\newcommand{\ie}{\emph{i.e.,~}}
\newcommand{\tri}[1]{\left<#1 \right>}
\newcommand{\ang}[1]{\left\langle {#1} \right\rangle}
\def\beginsmall#1{\vspace{-\parskip}\begin{#1}\itemsep-\parskip}
\def\endsmall#1{\end{#1}\vspace{-\parskip}}
\newcommand{\F}{\mathcal{F}}
\newcounter{todocounter}
\newcommand{\todonum}
{\stepcounter{todocounter}{(\thetodocounter)}}
\newcommand{\dd}[1]{\textbf{\color{blue}
[[\todonum\ dd: #1]]}}
\newcommand{\tb}{\makebox[0.6cm]{}}
\newcommand{\due}{\makebox[1cm]{}}
\newcommand{\tre}{\makebox[1.3cm]{}}
\newcommand{\hide}[1]{}
\newcommand{\commentout}[1]{}
\newcommand{\M}{\mathcal{M}}
\newcommand{\I}{\mathcal{I}}
\newcommand{\ccore}{{\small\textsc{common\!\_core}}\xspace}
\begin{document}

\title{
Byzantine Processors and Cuckoo Birds:\\
Confining Maliciousness to the Outset}
\author{Danny Dolev (HUJI)\footnote{Danny Dolev is Incumbent of the Berthold Badler Chair in Computer Science. 
The Rachel and Selim Benin School of Computer Science and Engineering
Edmond J. Safra Campus. This work was supported in part by the HUJI Cyber Security Research Center in conjunction with the Israel National Cyber Bureau in the Prime Minister's Office.
This research project was supported in part by The Israeli Center of Research Excellence (I-CORE) program, (Center  No. 4/11).  This work was supported in part by the Ministry of Education of China through its grant to Tsinghua University.
Email: danny.dolev@mail.huji.ac.il.} \ and
Eli Gafni (UCLA)\footnote{
Eli Gafni was partially supported by Israel-USA BSF grant 20152316 and an NSF grant 20170416.
This work was supported in part by the Ministry of Education of China through its grant to Tsinghua University.
Email: gafnieli@gmail.com.}\\[1.2ex]
\vspace{-2em}
}

\date{}

\maketitle

\thispagestyle{empty}

\section*{Abstract}
\vspace{-1em}

Are there Byzantine Animals?
\\
A Fooling Behavior is exhibited by the Cuckoo bird. It sneakily replaces some of the eggs of other species with its own. Lest the Cuckoo extinct itself by destroying its host, it self-limits its power: It does not replace too large a fraction of the eggs.  
Here, we show that any Byzantine Behavior that does not destroy the system it attacks, i.e. allows the system 
to solve an easy task like $\epsilon$-agreement, then its maliciousness can be confined to
be the exact replica of the Cuckoo bird behavior: Undetectably replace an input of a processor and let the processor behave correctly thereafter with respect to the new input. In doing so we reduce the study of Byzantine behavior to fail-stop
(benign)  behavior with the Cuckoo caveat of a fraction of the inputs replaced.
This goes beyond the reductions shown in the past that apply to colorless tasks.
We establish a complete correspondence between the Byzantine and the Benign, modulo different thresholds, and replaced inputs.

This work is yet another step in a line of work unifying seemingly distinct distributed system models, dispelling the Myth that Distributed Computing is a plethora of distinct isolated models, each requiring its specialized tools and ideas in order to determine solvability of tasks (that is, Computability rather than Complexity).
Thus, hereafter, Byzantine Computability questions can be reduced to questions in the benign failure setting.
But vice versa too.
In the more structured settings of asynchronous benign failures and synchronous
Byzantine failures, researchers investigated correlated faults.
We show that the known results
about correlated faults in the asynchronous benign setting can be imported verbatim to the
asynchronous Byzantine setting. 
Finally, as in the benign case in which we have the property that a processor can output once its faulty behavior stops for long enough, we show this can be done in a similar manner in the Byzantine case. This necessitated
the generalization of Reliable Broadcast to what we term Recoverable Reliable Broadcast.

\newpage

\section{Introduction}
\vspace{-1em}


``Deep Learning'' systems are built. They work, and will probably be the underpinning of a new Industrial Revolution.
It happens now and it proceeds lacking serious Theoretical Foundation. Theory, if developed, will have to play catch-up, explaining why the systems work the way they do.
Distributed Computing history is not unlike the current state of Deep Learning. Distributed Systems have been built and changed the world.
Theory played and still plays catch-up. But does it play successfully? The hallmark of Theory is Unification. Reducing a seemingly chaos into few principles.
This paper, motivated and helped by other recent successful unification steps, shows
that Byzantine Models are, to paraphrase Ecclesiastes (Koheleth) \cite{Koheleth}, ``Nothing New Under the Sun.''

And, it shows the value of Theory. This paper shows that the
solvability of tasks in the Byzantine model can be reduced to
a question about the solvability of tasks in the benign failure model. For a designer of a distributed system dealing with Byzantine processors, knowing whether the assumptions made theoretically allow for a solution, is as important as for a designer of a centralized system to know whether the system 
actually solves an $NP$-complete problem, or worse, and Undecidable
one. The reduction we introduce simplifies answering this solvability question. 

A limited reduction that applies mainly to the so called \emph{ colorless} tasks
has been derived recently in \cite{MTH14} and then rediscovered in \cite{IRS16}. 
The authors of \cite{MTH14} claim their reduction
is applicable conceptually to more than colorless tasks.
In their later paper \cite{MHVG15} there is no mention of \cite{MTH14},  that with respect to the non-colorless task of Multi-Dimensional Epsilon-Agreement \cite{MH13,VG13}, a task that we know now can be reduced to benign failures according to our Cuckoo model, they could have treated it in the benign model instead of resorting to analysis in the Byzantine model, with all the clutter it carries.  The Cuckoo model, by doing away with the notion of ``byzantine processors'' 
removes barriers that are creating blind spots in our understanding. We give an explicit mapping 
of the Byzantine in terms of the Benign. The era of Byzantine Computability is over!

This investigation was motivated by recent unification steps that gave rise to our inquest as to their ramifications to the Byzantine model. 

In \cite{AG15}, it was shown that as far as solving tasks, asynchronous models
can be reduced to synchronous models with mobile-faults. If this is the case in the benign setting, what about the Byzantine? Also, there \cite{AG15} processors are always ``correct,'' they do not ``crash.'' Their seemingly faulty behavior is due to the communication subsystem that is faulty. If we view Byzantine behavior not as a pitch-fork equipped Lucifer which takes over the soul of a processor, but just takes over a processor's outgoing communication links, then there is meaning to what Byzantine processor outputs, it is a victim rather than the perpetrator. 

Of course the Byzantine Model \cite{Agree80,Agree82} came to capture a processor malfunction. The point is that a theoretical model should be faithful in producing results commensurate with observable behavior. It should not necessarily try to capture the precise mechanism by which this behavior occurs in reality. In the benign faults model we try to capture ``crash,'' yet the right way to look at it is just as a worry about crash that therefore can be captured by communication delay, and
thus allows a processor to rejoin after suffering the delay. Similarly, a Byzantine processor suffers malicious communication tampering, but once the tampering ceases the processor should be able to join in. Thus, we actually generalize the Byzantine Model of almost four decades, streamlining it with the benign fault mechanism modeling, while maintaining all legacy results.

\hide{
The reduction in \cite{AG15} extends another 
early unification step
of shared-memory and message passing,
\cite{ABD95}, beyond $t<n/2$, to any number of failures albeit at the cost of making the reduction
allow for processor's-starvation, i.e. the system will progress but individual processor's action might be indefinitely postponed. Starvation plays role in long-lived problems. 
The paper \cite{AG15} investigates the correspondence between message-passing and shared-memory with respect to one-shot problems, referred to as {\emph tasks}, as we do here between the Byzantine and the Benign failures.
}
\hide{
More importantly the idea in \cite{AG15} is that processors never ``crash.'' 
It is rather the communication substrate that drops messages. Thus, if processors are always ``correct'' what
happens when we carry this view to the Byzantine model? It then becomes meaningful to ask
all processors to output, if somehow they succeeded to ``announce'' an input. This is the key
idea which let us proceed beyond colorless-tasks.
\dd{don't we need a ref here?}
}

The other motivating recent result \cite{GG11,DFGK15} 
raised the idea, in the benign case, that the connection
between a processor and the thread it is supposed to execute, is more tenuous than was thought.
The connection can
be confined to the processor providing an input to the thread, and then fetching the output, obtained by the thread termination,
to be be delivered to the user.
Between providing input and delivering an output to the user, advancing the program-counters of active-threads - those whose input has been
provided and whose execution has not been terminated, is a communal effort.
It is a negotiation between all processors what a specific thread has ``read'' as the system state.

What about the Byzantine model then? If advancing program-counters is a communal effort, what meaning
there is to a ``misbehaving-thread?''  All processors are responsible for it. Not surprisingly, the Cuckoo model we derive confines Lucifer only to the point of telling the system 
what input some of the threads should be run with. Although motivated by \cite{GG11,DFGK15}
we end up associating a processor with executing its thread and the communal effort is about accepting or rejecting a proposed step of a processor. It turned out to be less involved than our initial ``pure'' communal effort.

In an initial effort (\cite{DG16}) we got to provide reduction that applies for the ``correct'' processors, i.e. those whose communication links were never taken over by a Byzantine adversary. But how do we proceed once those correct processors output, if we now assume that the Byzantine Adversary gradually relinquishes its hold of some of the processors it interfered with before. How can a processor made whole (albeit with respect to its possible fake input) after other processors might have recognized it is faulty and stopped listening to it? 

Current classical technique of ``Reliable Broadcast'' (\cite{Bracha:1984})
does not allow recovery from such an implicit shut-off. Yet, in the benign failure ``recovery'' is trivial, as a seemingly ``crashed'' processor comes alive. Can it be that this fiat is doomed in the Byzantine case? This will contradict our sense of elegance that there should be \emph{complete} analogy between the 
Byzantine and the Benign systems. Indeed, the most time-consuming part of the technical research involved in this paper was to eventually abandon the belief and the trials at a proof that this incomplete analogy is inherent and actually come with a modification to Reliable Broadcast (Recoverable Reliable Broadcast) that makes the analogy complete.

It also necessitated our generalization of the notion of what a model is while leaving it compatible with the past definitions.
Traditionally, a model prescribes progress conditions independent of the execution
of an algorithm. Here we allow progress conditions to change as a function
of the outputs delivered. We assume a model in which once $n-t$ outputs  are delivered then at least from there on
one additional processor that has not outputted will not experience further attacks on its communication anymore.
\hide{
It would have been nice to have this condition as an eventual one, not requiring it to happen immediately after outputs are
delivered. It is an open question whether recovery under the eventual condition is possible.
}

Finally, if our reduction is complete it should go the other way around too. We should be able to import models and results from the Benign to the Byzantine. In the
synchronous and then the asynchronous benign models
a body of work analyzed Correlated Faults 
(Also called Adversary Model) \cite{HirtM97,AFM99, DFGT11, HR10, GK11}, and characterized the computability of such models. Our reduction allows to do same for the  Asynchronous Byzantine once we have derived Reliable Broadcast \cite{{Bracha:1984}} for Correlated Faults. We do just that.
That said, we nevertheless keep the paper cast in the $t$-resilience and faulty/correct processors nomenclature, as after almost four decades most of us are more comfortable with it, even though it is exactly what this paper advocates to forget. 

In the next section we elaborate on the models we deal with in this paper. Then embark on a sequence of reductions, introduce the Recoverable Reliable Broadcast,
present Reliable Broadcast for the correlated faults model, review related work, and 
close with concluding remarks. But first, we summarize our contributions.

\hide{

That there is a connection between the Byzantine and the benign failure model,
dates to the early days of the Byzantine model \cite{coan88}.
But to paraphrase Mark Twain, any paper that claimed the demise of the Byzantine Model 
as ``special,'' was shown to be  an exaggerated claim by a follow-up paper.
Case in point the Multi-Dimensional Epsilon-Agreement investigation \cite{} mentioned above.
 
\hide
{Recently in \cite{MTH14}, followed by \cite{IRS16} it was shown that 
Byzantine models can be reduced to fail-stop models as far as special tasks
called {\emph colorless} \cite{BGLR} are concerned.
Yet, as recently as \cite{MH13,VG13,MHVG15} the task of Multi-Dimensional
Epsilon-Agreement was analyzed directly in the Byzantine case.
Indeed, this task is not colorless.
}
This paper provides a complete correspondence between the Byzantine and the benign failure models. \hide{
failure model.
According to our reduction, the task could have been dealt with in the fail-stop case too.
It was pity to cast the problem in a context that obscures the main point 
of an intriguing question by referring to
``malicious processors,'' and ``reliable broadcast,'' when it all could have been
derived in the much simpler setting.
}
Thus, this paper asserts to be, really and truly the last word about Byzantine Computability! 
Any question in the Byzantine domain can be analyzed, as solvability goes, in the benign failure case domain.

Reciprocally, in the benign failure setting a beautiful generalization of the $t$-resilient model exists,
called the Adversary Failure Model \cite{DFGT11}.
It proposes to replace the threshold of $t$ with a list of potentially faulty 
sets (bad-sets) of processors, i.e. correlated faults. If Byzantine is not ``special'' one should be able 
to have the analog of this benign failure model in the Byzantine case!
Indeed, it has been done more than a decade before \cite{DFGT11} in the Byzantine case but restricted to the
synchronous-Byzantine case in \cite{HirtM97,AFM99}.
Thus,  to-date, the generalization of threshold to correlated-faults set was investigated in the synchronous
Byzantine case, and the asynchronous benign case.
It was not investigated within the asynchronous Byzantine case.
Here we cast all our analysis within this setting, i.e. requiring the generalization of
say, reliable-broadcast \cite{}, which was given in the threshold setting, to the generalized list setting.
In particular then, by casting our reduction from Byzantine to the benign in the generalized setting, rather than in the threshold setting, we can then completely
determine the synchronization power of the correlated faults in the asynchronous Byzantine case.

\hide{
Pushed by the Thesis the Byzantine is not a new creature under the Sun we consider importing this model to the Byzantine domain.  Such an import should allow a malicious set
of cardinality greater than a third!
Lo and Behold, not only the generality exists in the Byzantine case, it is so natural and simple,
but it is astonishing it took more than a decade for this import to happen.
}

As to the obligatory ``relevancy'' question, why consider such a generalized model of correlated faults? 
As eloquently argued in \cite{HR10} in modern multi-computer some processors, say, those sitting
on the same card, are correlated, and some are not.  Consequently, a more accurate working assumption is not to bound the number of failures, but actually to list all the combinations of processors whose possibility of failures are correlated. This correlation is captured by a list of bad-sets. When processors fail, the set of processors that failed is a subset of some bad-set.

In the benign failure case, this generalized system was analyzed in \cite{DFGT11} 
and shown to have the power of $k$-set consensus, for some $k$.
Further analysis in \cite{HR10,GK11} has shown that $k$ to be the cardinality of the hitting-set \cite{gary1979computers},
also called \emph{core-set} in \cite{JM03}, of the collection
of the complements of the bad-sets (a complement of a bad-set is called a \emph{ live set}). 

\hide{
Thus, to let the reader ponder, consider a synchronous system of $n$ processors were a subset of one
set in a collection of sets $M_1,M_2,...$ of processor are made of malicious processors. Under what condition on
the collection can the system solve consensus? Obviously, for all $i$ $|M_i|<1/3n$ is sufficient, but is it really necessary?
}
}
\subsection{Contributions}
\vspace{-0.5em}
\subsubsection{Results}

\beginsmall{enumerate}
\item The $t$-resilient Asynchronous Byzantine model is equivalent to the $t$-resilient Asynchronous Benign model, where at most $t$ of the user's inputs have been changed (i.e. the asynchronous Cuckoo model).
\item The $t$-resilient Asynchronous Byzantine model is equivalent to the a \emph{synchronous}  $t$-mobile model, where at most $t$ of the user's inputs have been changed (i.e. the synchronous Cuckoo model).
\item Extend Correlated Faults to the Asynchronous Byzantine model.
\item Design a Recoverable Reliable Broadcast algorithm where a processor under attack can recover once some attacks stop (signaled by the number of outputs).
\endsmall{enumerate}
\subsubsection{Concepts}
\beginsmall{enumerate}
\item Byzantine models as an attack on the communication subsystem rather than at attack on the processor's program.
\item A model notion as conditions for progress where the condition might change as a function of the outputs already delivered.
\item Ascribe meaning of output to  a ``byzantine processor,'' and require its progress once the attack stops. 
\endsmall{enumerate}

\hide{
We do not stop at importing the Adversary to the Byzantine model. Since asynchronous benign failure models can be modeled as synchronous models with mobile faults, we push the reduction from Byzantine Asynchronous to Byzantine-Mobile Synchronous, and vice versa. We show the synchronous Byzantine-Mobile Model to be exactly the Asynchronous Byzantine model. Since traditional asynchrony and synchrony, now translates into mobile vs fixed failures, it gives rise to questions about possible combination, e.g. a list of mobile-bad-sets and a list of fixed-bad-sets. This Hybrid model was characterized in \cite{hybrid-arxiv16a}.
}


\section{Tasks, Models and Problem Statement}

In this section we first repeat the notion of task, define the models we deal with, 
and state the problem this paper solves.

\subsection{Tasks}
A task is the elementary problem a distributed system implements.
In distributed computing a task play the role a function plays in centralized computing.
The task is a mathematical triple $(I,O,\Gamma)$, where $I$ consists of set of tuples of sets of processors with their inputs, $O$ consists of set tuples of sets of processors with outputs, and $\Gamma$ is a set of
binary relations assigning a set of output tuples from $O$, to any  given input tuple from $I$. This relation is subject to the constraint the the set of processors in the input tuple, and the set of processors in the output tuples match.
Hence, notions like ``correct'' and ``faulty'' that might come from a model, should not be referred to in a definition of a task.

The task prescribes to each participating set of processors with their inputs, what are the output combinations that they are allowed.

The notion of which processors participate in a protocol and which are not is model dependent.
In general, if a set is participating it should be a set of processors that is closed under
the relation ``affected by.''  A participating processor should be oblivious to a non-participating one.

For instance, in the task of leader election on two processors $p_0$ and $p_1$, the input sets
are\\
 $\{ \{(p_0,p_0)\}, \{(p_1,p_1)\}, \{(p_0,p_0),(p_1,p_1)\}\}$,\\
the output sets are\\
 $\{ \{(p_0,p_0)\}, \{(p_1,p_1)\}, \{(p_0,p_0),(p_1,p_0)\}, \{(p_0,p_1),(p_1,p_1)\}\}$,\\
 and $\Gamma$ is\\
  $\{ ((p_0,p_0),(p_0,p_0)), (p_1,p_1),(p_1,p_1)), (((p_0,p_0),(p_1,p_1)),
((p_0,p_0),(p_1,p_0))), (((p_0,p_0),(p_1,p_1)),
((p_0,p_1),(p_1,p_1)))\}$.

\hide{
In words, it prescribes for each set of participating processors with their corresponding inputs,  what combinations of output are acceptable. The notion of participation is model depended. It is easy to define it for 
Shared-Memory, e.g a processor that wrote at least once.  Yet, in models in which a ``write'' is implementable by a whole procedure, then the participation notion has to be precisely defined. On the intuitive level participation has an inductive definition: If a set is participating, then it better be closed under the notion of ``affected by.'' A participating processor should not be affected by a non-participating processor.
}
\subsection{Models}
A model is a setting in which processors learn about each other's input, to produce each an output.
The problem processors face is that they are required to output, under a limited knowledge of other processors input, and correspondingly, limited knowledge what other inputs other processors know.
Hence to learn enough information to output, they are involved in negotiations of telling each other what input they know and updating their knowledge about others. Of course, they cannot negotiate forever and need to output at some point, hence the problem.

Depending on the assumptions about the means of the negotiations, some models are more ``tightly coupled'' than others allowing to solve a larger set of tasks.
In this paper we do not consider systems that cannot solve $\epsilon$-agreement \cite{Rapprox}. In this task we are given a finite interval of $R$ of size at least $1/{\epsilon}$. Processors' inputs are arbitrary integers in the interval and all have to output integers within the convex hull of the inputs such that pairwise they output either the same integer or distinct integers which are adjacent.
We do not consider models which cannot solve $\epsilon$-agreement, 
as we expect any reasonable system to have enough coordination power to solve this task.

Processors negotiate and output. After outputting they linger around to help other processors communicate (essentially by serving in effect as repeaters). 

We take the view of a model  following \cite{AL91} as a contract between a specifier and a programmer
(cf. Rely-Guarantee in \cite{AL91}). Progress, in the form of additional processors outputting, is required in runs of the model where the specifier guarantees hold. 
We insert a new element to the contract not used in the past. The specification may be parametrized by processors' outputs. E.g. in the $t$-resilient case the specifier promises that all processors but $t$ will participate and will  infinitely often take steps in the negotiations (be alive). Under
these conditions, the programmer is to deliver at least all but $t$ outputs. Thereafter the specifier requires an output if at least all but $t$ processors are alive, and at least among the live processors, there is a processor that hasn't outputted yet.

\subsubsection{The Models in this Paper}
We assume a set of $n$ processors 
$\Pi=\{p_1, p_2, ..., p_n\}$. 
The asynchronous settings is the classical asynchronous processors communicating over
a point to point complete network. But, our definition
of Asynchrony and Byzantine is not Classical. While traditionally faulty behavior has been
ascribed to processors, we ascribe it to the communication subsystem. We assume an adversary that can manipulate outgoing messages. 
What the Byzantine and the Benign here have in common is that each has a set of bad-set of processors $\B$, where for all 
$B \in \B$, $B \subset \Pi$ and if $B' \subset B$ then $B' \in B$, i.e the set of bad-sets is subset closed.
A good-set is any set $\Pi \setminus B$.
We denote the set of good-sets $\G$.
In the Benign Fault model
the Adversary can remove messages sent by a set of processors 
from $B$. In the Byzantine case it can 
not only remove messages from $B$ but also alter their content before delivery (but, it cannot generate messages 
on it own).
Traditionally, $t$-resilience stands for a bad-set collection $\B$ that contains any set of cardinality less than $t+1$. 
\hide{
Since we are accustomed to think $t$-resilience rather than Adversary we will stick throughout the paper with the special adversary of $t$-resilience. To see that all we do generalizes to Adversary with a collection of set $\B$, at the end of the paper we show how Reliable Broadcast generalizes to a collection $\B$ that satisfies certain simple condition. It then becomes obvious that all we did does generalize to Adversary.
}

\begin{enumerate}
\item {\bf Benign Asynchronous System:} The adversary can choose a set of processors 
of cardinality of at most $t$ and remove some or all of their messages. 
Otherwise it can delay any message sent by anybody 
for any finite time. This allows the programmer to use the construct ``wait until received messages from a (good-)set of cardinality $n-t$,'' when  all processor are programmed to send messages to all.

Under this conditions, the programmer is obliged to deliver at least $n-t$ outputs. From there on the delivery of outputs hinges on the additional condition that the messages sent by some processor which hasn't 
outputted will not be delayed infinitely long. (Traditionally, some researchers allowed processors which have output to halt. This is a mistake and cannot be sustained in the Byzantine Asynchronous Model. Therefore, in general, we take the view that processors never halt.)

\hide{
We take the view of a model  following \cite{AL91} as a contract between a specifier and a programmer
(cf. Relay-Guarantee in \cite{}). Progress is required in runs of the model where the specifier guarantees hold. The specification may be parametrized by processors' outputs. In our case, the specification above holds until $n-t$ processors have output. Under what conditions the programmer is obliged to deliver another output? Obviously, with the current specification above the programmer cannot deliver another output as the adversary can shut off communication from the remaining $t$ processors without output, if those are the $t$ it chose.

Most researcher stopped specification at the point that $n-t$ did output. They allowed then processors tat output to ``halt,'' a notion inherited from centralized computing. It works for the benign but obviously would not work for the Byzantine! Hence we assume that live processors continue to participate even after the output. Moreover, after $n-t$ processors did output that specifier guarantees that at least $n-t+1$ processors will be alive, i.e. from there on at least $n-t+1$ processors will not be controlled by the adversary. And in general the number of processors not controlled by the adversary past the point 
of number of outputs $|out|$ is $\max {n-t, |out|+1}$. It is easy to see that this is equivalent to saying that in addition to the good-set of $n-t$ other processors that has not output will be released from the grip of the adversary.
}
\item {\bf Byzantine Asynchronous system:} In this system in addition to the above, the adversary has the power to change the content of messages sent from a fixed set 
processors of cardinality at most $t$. Notice that we do not allow the adversary to inject messages into the system by itself. (This last condition is restrictive with respect to the view that Byzantine means that a processor was taken over, yet we claim that
nevertheless all
legacy results hold.)

\noindent
{\bf Remark} The traditional threshold Byzantine Asynchronous model motivated a variation of the $\epsilon$-agreement task
in which the at most two integers output have to be in the convex-hull of every $n-t$ inputs \cite{Rapprox}.
It is easy to see that the new variation we may pose in which we require the output integers to be
in the convex-hull of every good set, is also solvable using the same techniques as in \cite{Rapprox}.
And, in the multi-dimensional version of the problem we can replace the $t$ ``bad'' processors by an
adversary that captures $t$ (i.e. $t,n/4$ replace by good sets each 4 of which have nonempty intersection). This
allow to calculate the resulting set consensus power via reduction to the benign adversary.

The \emph{Cuckoo Asynchronous system:} 
This system is a Benign Asynchronous system in which the adversary not only remove messages from a set of cardinality at most $t$ but it also might,
unbeknowethed to the affected processors, change their inputs.  

The \emph{problem statement} of this paper is to show the equivalence between the Byzantine Asynchronous system and the Cuckoo Asynchronous system.

Our notion of equivalence between two models is that they both solve the same set of tasks. In the Byzantine/Cuckoo model, it is up to the user to define this notion of solvability since the adversary might produce an input combination not expected by the user. We claim equivalence whichever way the user will resolve such a notion, as our mechanism of showing the equivalence is by showing that any  protocol in one model can be emulated by a protocol in the other.

\noindent
{\bf Remark:} Once we are in the Byzantine case past $n-t$ outputs we tread in an unchartered territory. Never before the concept of an output from a ``byzantine processor'' 
was considered. What we try to capture is a correct processor, that snoops at others communication while what it sends may be under attack. And then, whether it can make itself whole once the attack ceases (albeit with possibly changed input). When does an attack cease? What the specifier promises the programmer, in this case, is that once $n-t$ outputs are delivered, one of the processors that have not output yet will cease to be under attack. From there on its messages would not be altered any more and all its messages sent thereafter will eventually be delivered. We use the term ``become correct'' for such a processor, and ``correct'' for a processor whose messages are never attacked by the adversary. Observe that after a processor becomes correct, the adversary can still choose to deliver old and manipulated messages the processor had sent prior to recovering (becoming correct).

3. {\bf Synchronous Mobile Adversary:}  We show that the two asynchronous systems above are equivalent to synchronous systems. The Benign Asynchronous is equivalent to a synchronous system 
in which the choice of at most $t$ processors from which the adversary may remove messages is a choice \emph{per round}, hence the name ``mobile'' adversary. We also show that the Byzantine Asynchronous system
is equivalent to synchronous system
as above, only that now the adversary can also tamper with messages
rather than just remove them. The point of the paper is that this additional power of tampering can be confined to appear just as changing some $t$ inputs at most, and from there on the adversary can just remove messages, i.e resulting in the Synchronous Mobile Cuckoo Model.  We denote these systems as \BIMObb where in the benign asynchronous case the input set to the synchronous equivalent system is the empty set.

\noindent
{\bf Remark:} The gap between the Asynchronous system we presented and the Synchronous Mobile is  large.
In the latter, in a round, the missed messages are all confined to messages from a single set of at the most cardinality $t$. In the former, the asynchronous case, each processor might miss at most $t$ messages too, but the missed messages are not confined to come from the same set of cardinality at most $t$.
The direct synchronous analog to the asynchronous system is to allow the adversary to attack
the reception process rather than the transmission process: For every process $p_i$ in a round, the adversary can choose at most $t$ incoming messages to be removed. This leads us to the synchronous system
denoted by \BISYNt. The \BISYNt will be mediating between the Asynchronous and the Synchronous breaking the final reduction into two intermediate reductions.
\end{enumerate}

\hide{
*********************************************
Other than that, the processors themselves are correct and follow the protocol according to their state machine and the incoming messages.  
This model encapsulates the traditional fault models, while enabling us to talk about the state of ``faulty" processors as they do indeed receive messages. 
And, they may output once the faulty behavior affecting their outgoing link stops.
In order to distinguish processors whose messages are manipulated from others, we sometimes call them faulty processors (respectively, correct processors), even though the processors themselves are not faulty, but exhibits a fault behavior because of the adversary. And, we do assume that the faulty behavior affecting the links of processors
will stop, alas, as a function of how many processors have already committed to output. Thus, for instance, in the $t$-resilient model we assume initially that some $n-t$ processors are alive, once they output, we assume $n-t+1$ are alive, and so forth, until all processors output. (Notice, that we specifically reject the notion mistakenly, in our opinion, borrowed from the centralized setting, that make a processor ``halt'' after it outputs.)

In the Byzantine case an exhibited faulty behavior can be detected by correct set, and sometimes not. This leads to the main distinction between the benign case and the Byzantine case. In the former, when faulty behavior stops, all processors eventually can output. In the latter, the Byzantine case, if Byzantine behavior prevented the system from agreeing on a processor's input, even if from some point on no messages are altered or delayed forever, the system cannot recover. Hence, a Byzantine behavior can prevent a processor from ever participating due to initial faulty behavior. Thus, when equating the two systems we equate them with respect to the participating set of the Byzantine system.
}
\subsection{The Correlated-Faults Adversary Model}

In \cite{HirtM97,AFM99} it was suggested to replace the condition for synchronous Byzantine agreement from $n \geq 3t+1$ to a list of faulty set $\B$ under the condition that no union of three sets from $\B$ is $\Pi$. By showing that this condition is enough to implement Reliable Broadcast all our results cast in the resilience model carry verbatim to the correlated case with all the implications and results of correlated case in the Benign model (see \sectionref{sec:RB}). 
Obviously if the condition is violated and the union of some three sets is $\Pi$, then the Adversary can cause network partition with no possibility of a meaningful computation.


%


\hide{
The generalization we made to the notion of adversary as taking over the communication channels rather than the soul of processors, does not affect any legacy results.
\begin{claim}[Traditional Byzantine Models]
For $\B=\{B\mid B\subset\Pi, |B|=t\}$, and an adversary that only manipulates outgoing messages, all previous results in the Byzantine models hold.
\end{claim}

Call a system with a collection of bad-sets $\B$, $\B$-Byzantine-resilient/$\B$-resilient.
The main objective of the paper is to show that the \tbasyn model is equivalent,  to the  (benign)  
\toasyn
in which the adversary can only replace  the input values of a set $B\in\B$ of processors. 
We call a model in which the adversary can only replace inputs
of a set $B \in \B$ and from there on can only delay messages from $\B$, a {\em Cuckoo} model.

\hide{
What do we mean by equivalent? Consider a run in \tbasyn with a participating set $P$ in which all
participating processors have output. The notion of participating set for \toasyn is well define: A processor $p$ participates if it has ``written'' its input. In the synch message-passing system a ``write''
can be considered to have occurred when a good-set knows the input of $p$.  
We do the same for \tbasyn. A write of an input of processor is considered to have occurred, when all processors in good-set have ``committed'' to the input of $p$, and have committed to the same value.
We distinguish between received and ``committed'' since in the Byzantine case different values of input
to $p$ may be in circulation. For a processor to act on an input to another processor, we assume some process of ``committing'' to a value, which will make it equivalent to knowing in the benign case.
Of course in an actual run we cannot guarantee that all the processors participating will output, but every finite
run has a finite continuation in which all the processors in the participating set output, hence we compare the system for such well defined runs. Given a run in one system in which all processors output, we will produce a run in the other system with the same input and with the same participating set, in which the participating set outputs the same.
}

Instead of showing the equivalence directly we show both systems, the Byzantine and the Cuckoo model to be equivalent to a \emph{synchronous} message passing system, with an adversary that that prior to the first round can change the inputs of some set $B \in \B$ processors, and afterward it becomes \emph{mobile}: Every round it can choose a different set $B \in \B$ to remove outgoing messages from.

We first show the \toasyn model is equivalent to a model which is \emph{synchronous}  $\B$-mobile model.
In the $\B$-mobile benign model the communication subsystem is a \emph{synchronous} point-to-point message passing network, and the adversary is benign, i.e. it can just remove messages. In each round 
the adversary chooses potentially a new set $B \in \B$ and it can remove any subset of the messages processors in $B$ send. As $B$ can vary from round to round, hence the designation \emph{mobile}.

\dd{first model}
We then show that the \tbasyn model is equivalent to the synchronous $\B$-mobile byzantine model, where in addition to the benign mobile system, before the first round the adversary can change the inputs of processors in some set $B_{initial}$. That is, the effect of the maliciousness is confined to changing input of some set in $\B$.
There after, the adversary becomes benign.
We call this system \BIMObb, for \emph{Byzantine Input Mobile Omission (with Cuckoo set $B_{initial}$}).

\hide{
\BIMObb, called \emph{Byzantine Input Mobile Omission}, is a synchronous point-to-point message passing Cuckoo model, \ie
with adversary that can replace the input values of a set $B$, $B\in\B$,  of  processors (we refer to that as Byzantine Inputs).
In addition, in each synchronous round, the adversary can choose an arbitrary set, $B'\in\B$, of  processors and remove some 
or all of the messages sent by processors in $B'$ to other processors in that specific round, as long as it delivers at least one message between every pair of processors in each round.\footnote{The last requirement makes the model conceptually closer to a read/write model.} 

In \BIMObb system  all processors are obedient and follow the prescribed protocol.  Processors whose input was manipulated are not aware that the adversary has done so (since they are not aware beforehand what their input should be).  Processors whose messages are being removed are not aware of that at that round, though they may learn about that from other processors later on.
Since no processor is actually faulty, we can focus on requesting, in \BIMObb system, that all processors reach the protocol's objectives.
}

Given $\B$ that satisfies $B2$, and a deterministic protocol 
we prove that
the \BIMObb system is equivalent, with respect to the participating set, to the \tbasyn,
in the sense that each protocol running in one model can be simulated by a protocol running in the other model. Obviously, since participation is an eventual notion (it is defined for an infinite run), our notion of simulation is not ``on-line.'' Given an infinite run in one model we produce a run in the other model with the same participating set and the same outputs.

We break the proof into two parts. We put a ``mediating'' system between \toasyn and \tbasyn on the one hand and \BIMObb on the other.  Why? In an asynchronous system processor do not know which bad-set was chosen. Hence, they need to proceed whenever the got messages from a good set, as the complement can be bad, and if a processor does not proceed, all processors might not, and the system stalls. Part of the simulation is to show that this is not a difficulty since with additional messages exchange between processors they can simulate ``as-if'' they all missed messages from the same bad-set. To separate the issues we insert this mediating model.

\dd{Second model}
The first part is to  prove that a \tbasyn is equivalent the mediating synchronous system, called \BISYNt, that is similar to \BIMObb, only that now with respect to removing messages the adversary chooses 
a set $B(p)$ for each processor $p$, and it can remove any of the messages sent from the set $B$ to $p$. 
Once we complete the first part we  extend the resulting simulation to a simulation that addresses the more adversarially strict system, \BIMObb system.

A \BISYNt synchronous system is also a Cuckoo model, \ie the adversary may initially  replace the input values of some set of $B$ processors  (we refer to that as Byzantine Inputs, BI).  In addition, in a \BISYNt synchronous system, in each round the adversary can drop messages sent by any processor to any other processor, as long as each processor receives messages  addressed to it from a set $G\in\G$ of processors. 
Different processors may receive messages from different sets.

The challenge  is to ``turn" the Byzantine processors in the asynchronous system into processors that behave consistently with the protocol.
The main idea following \cite{GG11,DFGK15} is to notice that we can take control of executing a thread 
off the hands of the ``owning'' processors. Thus, if we just know what a processor read, we know what it will write. For this we can substitute a write value with the set of processors from
which the reader read.  Thus, instead of asking processors to send  ``arbitrary" values, which gives the adversary the freedom to introduce confusing values out of its hat,  we instruct processors to send in each round only the set of processors  they claim to receive messages from during the previous round. Each processor uses this information to locally simulate the state of each other processor and to determine what values each processor should have received and should have sent in each round of the simulated protocol $P$. If it sends different values mediating these values through ``reliable broadcast''
we either commit to a value, or the thread ``stalls.'' A stalled thread reduces the power of the adversary
who 

\hide{
Thus, in a \BISYNt synchronous system, the adversary has more power than in a \BIMObb system, since it may drop messages sent from more processors than in a \BIMObb system. The reason we start with the \BISYNt  model is that the power the adversary is allowed to use in a \BISYNt synchronous system is close to what the adversary can do in an asynchronous system, therefore this somewhat simplifies the simulation. 
Many of the key elements of the result are encapsulated in this part of the simulation.

Obviously, any protocol running in an asynchronous Byzantine system, can be simulated by a protocol in \BIMObb system 
by sending in each round a message that contains all previously sent messages.\footnote{This takes care of the messages the adversary had dropped.} 
Therefore, what  is left to prove is that any deterministic protocol running in a \BIMObb model (with obedient processors) can be simulated by a deterministic protocol in a \tbasyn, assuming that $\G$ satisfies the Byzantine fault predicate.  
This result implies that the only extra challenge in dealing with Byzantine faults is dealing with faulty inputs.  The rest of the protocol is equivalent to obedient processors that follow the protocol and an adversary that chooses which messages to drop.

Observe that a  \BIMObb system is the best one can hope for, since otherwise we would violate the FLP~\cite{FLP} result.  Our model leaves the adversary enough power to prevent processors  from reaching a consensus by dropping critical messages in each round to prevent the transition of the protocol to a uni-valent state.  On the other hand, in  simulating protocols written for a \BIMObb system in an asynchronous system we face the problem that in the simulating system one can't wait for all messages from ``correct" processors to arrive before progressing, so we will need to use special building blocks to compensate for that, as we will describe later.

We assume that each processor has an input value.  
A deterministic message passing protocol $P$ that solves a problem  in a \BIMObb system runs for a given number of rounds of message exchange and by the end of the protocol run each processor produces an output.  
Without loss of generality, assume that in the first round of running protocol $P$ processors are expected to share their input values.\footnote{Considering models of computation in which nodes' inputs are to be kept secret is outside the scope of this work. What we assume is that the protocol is not sensitive to having  processors gossip about their initial inputs. }  
For convenience, we tag  all messages sent in a given synchronous round  by a counter indicating the round number. 
Any deterministic protocol $P$ can be viewed as a transition function $\F(M_{r-1},S_p,r)$ that instructs each processor $p$ in a round $r$, given  its current state  $S_p$, and given the values received by the set of messages, $M$, received during the previous round, which actions to take in the current round.  
An action is what values to send to which processor and whether to produce an output.  $M_0$ of the first round is just the input value.
Thus, the state of the protocol at each round is a deterministic function of the initial input and the sequence of sets of values (via messages) received in all previous rounds.

Observe that any processor, $q$, that receives the  sequence of sets of values carried by messages  $M_0, ... M_{r-1}$ that were received by $p$ in previous rounds can determine what value protocol $P$ instructs $p$ to send to it (to $q$) in  round $r$. 
Moreover, $q$ can also know what value any other processor should receive from $p$ in round $r$.
We take advantage of these observations in the simulation.

The challenge  is to ``turn" the Byzantine processors in the asynchronous system into processors that behave consistently with the protocol.
The breakthrough idea is that instead of asking processors to send  ``arbitrary" values, which gives the adversary the freedom to introduce confusing values out of its hat,  we instruct processors to send in each round only the set of processors  they claim to receive messages from during the previous round. Each processor uses this information to locally simulate the state of each other processor and to determine what values each processor should have received and should have sent in each round of the simulated protocol $P$.

We break the proof into two parts.  The first part is to  prove that a \tbasyn is equivalent to another synchronous system, called \BISYNt, that is similar to \BIMObb. 
Once we complete the first part we  extend the resulting simulation to a simulation that addresses the more adversarially strict system, \BIMObb system.

A \BISYNt synchronous system is also a Cuckoo model, \ie the adversary may initially  replace the input values of some set of $B$ processors  (we refer to that as Byzantine Inputs, BI).  In addition, in a \BISYNt synchronous system, in each round the adversary can drop messages sent by any processor to any other processor, as long as each processor receives messages  addressed to it from a set $G\in\G$ of processors. 
Different processors may receive messages from different sets.

Thus, in a \BISYNt synchronous system, the adversary has more power than in a \BIMObb system, since it may drop messages sent from more processors than in a \BIMObb system. The reason we start with the \BISYNt  model is that the power the adversary is allowed to use in a \BISYNt synchronous system is close to what the adversary can do in an asynchronous system, therefore this somewhat simplifies the simulation. 
Many of the key elements of the result are encapsulated in this part of the simulation.
}
}


\section{Simulating a protocol running in \BISYNt system}

In this part we  emulate
 any protocol running in a \BISYNt system by a protocol running in a \ttbasyn.  
Thus, we prove that the adversary in an asynchronous system with the $t$  processors it controls and with the ability of delaying messages to all processors does not have any more power than the adversary  in the synchronously running  \BISYNt system. 


The idea behind the simulation is to completely simulate the original protocol running in a \BISYNt system  by a protocol running at each processor  in a \ttbasyn, such that  every  processor that becomes correct 
in the \ttbasyn would get the same output as it would have gotten in the \BISYNt system. We assume that initially there are $n-t$ correct processors, and new processors are relieved from the adversary's control and become correct during the run of the protocol.
 
In the emulation appearing in this section we make use of two elements.
The first element employed is to make sure that every processor commits to the message it sends in each round in a way that if any processor accepts a message $m$, everyone will eventually accept the same message $m$, and before accepting $m$ it accepts  all $m$'s causally ordered prior messages. 
Thus, the first element, \cosend (\algref{figure:rbg}), is a {\em Causally Ordered Recoverable Reliable Broadcast primitive}, which uses a generalization of the classical reliable broadcast as a building block, combined with the causality of message delivery (generalizing a combination of~\cite{Bracha:1984,DistComp}). 
As we detail below, to further limit the ability of the adversary to confuse  processors we instruct each processor, after sending its initial input, to send only the list of IDs of the processors it claims to accept and process messages from in the previous round, without any additional content (we differentiate between receiving a message, agreeing to accept it as a legitimate message to be processed, and processing it).
The second element is to locally simulate the state of the original protocol at every other processor, 
according to the list of processors' IDs being accepted, to determine what the original (simulated) protocol instructs each processor to do, what values should be sent and which should be received.

Let $P$ be a deterministic message passing protocol that is executed  in a \BISYNt system.  We will show a simulation of $P$ in a \ttbasyn.  In the simulation, in the first round each processor, $p$, uses $\cosend(1,p)$ to broadcast  its own input value, $\I$.  
In each subsequent round, $r$, each processor, $p$, uses $\cosend(r,p)$  to broadcast  to everyone a set, $\pi_{r-1}$, of the IDs of the processors from which it accepted and processed messages in the previous round, round $r-1$. 

We start with an overview of the simulation protocol (\algref{figure:state-sim}).
Each processor, $p$, maintains locally $n=|\Pi|$  state machines, $SM_i$, $1\le i\le n$, to reflect the state machines of the original (simulated) protocol at each other processor.
When $p$ accepts via \cosend a message $\tri{1,\I}$ from a processor $p_i$, it initiates state machine  $SM_i$ with the input value $\I$. 
If $p$ does not accept such a message from a processor, say $q$, it does not start the state machine $SM_q$.  
The properties of  \cosend ensure that if any  processor will ever accept a message  $\tri{1,\I}$ from a processor $q$, then eventually processor $p$ will also accept it too.

\begin{algorithm}[!t]
\footnotesize
\SetNlSty{textbf}{}{.}
\setcounter{AlgoLine}{0}
\  \hfill\textit{/* executed at processor $p$ */}\\
\lnl{line:def}  \     {\bf set}  $\forall k$ $accept[k]:=\emptyset$;  \hfill\textit{/*  the sets of  senders whose messages were processes at the given rounds */}\\
\lnl{line:def2}  \     {\bf set}   $\M:=\emptyset;$ $\bar\M:=\emptyset$; \hfill\textit{/* the set of accepted  messages that were not processed yet , and the set of processed messages */}\\
\lnl{line:def3}  \     {\bf set}   $O:=\emptyset;$ ;\hspace{2mm} \hfill\textit{/* the set of processors that their SM reached an output at the  processor */}\\

\ \\
\lnl{line:init-r} \    $r:=1;$\hfill\textit{/* the round number */}\\
\lnl{line:step1} \    {\bf invoke }  $\cosend(r,p)$ to broadcast $\I$;\mbox{\ }\hfill\textit{/* broadcast the input value, a processor sends  also to itself */}\\

\ \\
\nl \    {\bf do }  {\bf until} $SM_p$ halts:\hfill\textit{/* participate in all $\cosend(\ell,*),$ $\ell \le r$, protocols */}\\

\lnl{line:collect}\    \tb {\bf wait until } $|accept[r]|\ge n-t$ and $p\in accept[r]$; \\
\lnl{line:core}\ \mbox{\ }\hfill\textit{/* ``empty line" - for later use */}\\
\lnl{line:inc}\    \tb $r:=r+1$;\\
\lnl{line:send} \    \tb {\bf invoke } $\cosend(r,p)$ to broadcast $accept[r-1]$;\mbox{\ }\hfill\textit{/* broadcast the accepted and processed set in round $r-1$ */}\\

\lnl{line:end} \    {\bf end.} \\
\ \\
  {\underline {\bf In the Background:}} \\
\lnl{line:r1} Execute for each $\tri{r',p_i,\pi}\in\M$: \hfill\textit{/*  accepted message from $p_i$ for round $r'$   */} \\
\lnl{line:init-rec}    \tb {\bf if} $r'=1$ {\bf then start } $SM_i$ with input $\pi$;\hfill\textit{/* start a SM with the initial input */}\\
\lnl{line:init-rec2}     \tb {\bf if} $r'>1$ {\bf then } \\
\lnl{line:rec-m}  \tb\tb {\bf let} $M:=\{m_j\mid p_j\in\pi \mbox{ and } SM_j[r'-1] \mbox{ sends } m_j \mbox{ to } p_i\}$; \hfill\textit{/* the values $p_i$ should have received */}\\
\lnl{line:apply-m}  \tb\tb$SM_i[r']:=\F(M,SM_i[r'-1],r')$; \hfill\textit{/* apply  protocol $\F$ to determine the next state of $SM_i$ */}\\
\lnl{line:halted}  \tb\tb {\bf if } $SM_i$ outputs {\bf then} $O:=O\cup \{p_i\}$; \hfill\textit{/* Denote the outputting processor */}\\
\lnl{line:r3}    \tb $\M:=\M\setminus \tri{r',p_i,\pi}$;\\
\lnl{line:r4}     \tb $\bar\M:=\bar\M\cup \tri{r',p_i}$;\\
\lnl{line:add}       \tb   $accept[r']:=accept[r']\cup\{p_i\}$. \\
%
\caption[caption]{Simulating a \BISYNt system deterministic protocol   \\\hspace{\textwidth}\mbox{\ \hspace{0.83in}} in an asynchronous  system with $n>3t$.}\label{figure:state-sim}
\end{algorithm}

When processor $p$ accepts via \cosend a message $\tri{2,\pi_{1}}$ from a processor $p_i$, it uses the initial input values of all $q_j\in\pi_1$  
as the set of  input values to $SM_i$ to determine what values the simulated protocol $P$ instructs processor $p_i$ to send to every other processor in round $2$ of protocol $P$. 
Since \cosend implements  Causally Ordered Reliable Broadcast, before processor $p$ processes $\tri{2,\pi_{1}}$  from a processor $p_i$, it already accepted and processed all $\tri{1,\I}$ messages from all processors in $\pi_{1}$, and therefore knows the values $p_i$ should have received from the members of $\pi_i$.
Observe that if the adversary causes the message from processor $p_i$ to claim to accept an input from a processor, say $q$, that did not send it an input, then the 
$\tri{2,\pi_{1}}$ message received from $p_i$ will be put aside and the \cosend invoked by $p_i$ will not be completed at any other  processor until (and if) the input from $q$  will be processed.

Now recursively, when processor $p$ accepts via \cosend a message $\tri{r,\pi_{r-1}}$ from a processor $p_i$, it uses the  values every $q_j\in\pi_{r-1}$ should have sent in round $r-1$ of protocol $P$ to $p_i$ according to $q_j$'s state machine $SM_j$ at round $r-1$,  as values received by $SM_i$ in the previous round (round $r-1$)   to determine what  values the simulated protocol $P$ instructs processor $p_i$ to send to every other processor in round $r$. 

The simulation protocol, presented in \algref{figure:state-sim}, maintains  four data structures, $\M, \bar\M$, $accept$ and $O$.  
The set $\M$ contains the messages that were accepted via \cosend and were not processed yet. There is at most one
such message per sender (because all messages of the same sender are casually related).  
Each entry in $\M$ contains a round number, say $r$, a processor ID, say $p_i$, and the set of processors' IDs, from which processor $p_i$ claims to have accepted and processed messages in round $r-1$.

The set $\bar\M$ contains the list of processors whose messages were already processed by processor $p$, and their $SM$s are updated accordingly. 
Each entry in $\bar\M$ contains a round number, say $r$, a processor ID, say $p_i$, indicating that round $r$ message from $p_i$ was accepted and processed. By the \cosend properties, every processor that processes a round $r$ message from $p_i$ processes the identical message.

The \cosend\ protocol, \algref{figure:rbg},  is a  communication procedure that exchanges messages and holds received messages until they are ready to be accepted and  processed. The \cosend properties (as we later explain) imply that when an entry is added to $\M$, all casually prior entries were already accepted and processed (thus, are already in $\bar\M$ ), and as such are reflected in the respective state machines (as we explain later). 
Therefore, each message in $\M$ can be processed independently, since there are no causal dependencies among them.
Processing a message is just applying it to the state machine of the sending processor, using the current state of the state machines of all
the processors it claimed to accept their messages in the previous round.
Once a message is processed it is removed from $\M$ and added to $\bar\M.$
Observe that the simulation may indicate that at a certain round some processor is not sending a value to some other processor, then in such a case no such value is produced as an input to the relevant state machine.

The third data structure ($accept$) is the set of processors whose messages were accepted and processed in the given round. 
$accept[r]$ is the  current set of all the processors whose round $r$ messages were accepted and processed by $p$  via   \cosend.   
Once 
$|accept[r]|\ge n-t$, and $p\in accept[r]$, processor $p$ uses $\cosend(r+1,p)$ to broadcast the set $accept[r]$. After broadcasting this message processor $p$ continues to accept all rounds' messages via  \cosend and continues to apply them to the various state machines.
Each processor continues this process, outputs its output, and sends messages until its state machine halts.\footnote{The processor continues to participate in the \cosend protocols of other processors even after it halts. One can add a halting task that enables a processor to halt once enough other processors reach the right stage.}  

The fourth data structure is the list of processors whose state machines produced  outputs at the processor. Whenever the processor learns that some state machine produced an output it adds it to the set $O$.

\begin{algorithm}[!ht]
\footnotesize
\SetNlSty{textbf}{}{.}
\setcounter{AlgoLine}{0}
 $\M$, $\bar \M$ and $O$ are globally maintained sets \hfill\textit{/* executed by processor $p$ with  sender  $s$ in round $r$, invoked once per round */}\\


\ \\
{\bf Sender's Protocol:}\mbox{\ }\hfill\textit{/* $s$ is the sender and $v_s$ the value it broadcasts */}\\
\lnl{line:cog-step0}   \tb The sender $s $  invokes RecRB$(s)$ with $v_s$ to send to all.\mbox{\ \ }\hfill\textit{/* RecRB is a recoverable Reliable Broadcast (see \sectionref{sec:RRB}) */}\\

\ \\
{\bf Any  Processor's Protocol:}\\
\lnl{line:cog-test}  \tb  {\bf Upon receiving} an RecRB$(s)$ protocol message with value $v$: \\
\lnl{line:cog-ex1a}   \due {\bf if } ($r>1$ ) {\bf wait until} $\;s\in v$ {\bf and} $\forall q\in v, \tri{r-1,q}\in \bar\M$; \hfill\textit{/*   wait for the causally prior messages   */} \\
\lnl{line:cog-exec2}    \due join RecRB$(s)$ as participant; \hfill\textit{/* joined at most once per sender per round */}\\
\ \\
\lnl{line:cog-exec2a}  \tb  {\bf case}  accepted RecRB$(s)$ with value $v'$: \\
\lnl{line:cog-ex2a}   \due {\bf if } ($r>1$ ) {\bf wait until} $\;\forall q\in v', \tri{r-1,q}\in \bar\M$; \hfill\textit{/*   wait for the causally prior messages   */} \\

\lnl{line:cog-acpt} \due{\bf Accept:}  $\M:=\M\cup\tri{r,s,v'}$.\hfill\textit{/*   accept message $v'$ as the message sent by processor $s$  in round $r$  */} \\
\ \\
 {\bf Case} A new processor is added to $O$, \ie outputs:\mbox{\ }\hfill\textit{/* Try to recover */}\\
\lnl{line:cog-retest}\tb {\bf if} $|O|\ge n-t$ {\bf then} sender $s$ repeats the last send of \lref{line:cog-step0}.\mbox{\ }\hfill\textit{/* repeat the last sending */}\\
\caption[caption]{ \cosend\!$(r,s)$: A casually ordered reliable broadcast \\\hspace{\textwidth}\mbox{\ \hspace{0.841in}} with asynchronous  system with $n>3t$.}\label{figure:rbg}
\end{algorithm}

Observe that the  emulation, presented in \algref{figure:state-sim}, produces per each processor an agreed upon sequence of sets of values  $M_0, ... M_{r-1}$ processed by its SM in the related rounds, thus, simulating the exact behavior of protocol $P$. This implies that the above emulation is a protocol to simulate in a \ttbasyn a deterministic message passing protocol, $P$, in a \BISYNt system.

The enabling properties of the simulation reside in the details of \cosend, which we now describe. 
The \cosend\ protocol, \algref{figure:rbg}, is invoked per processor per sending round and consists of 3 conceptual parts.  
In the first part the sender of the current instance of the protocol    (in \lref{line:cog-step0})
invokes RecRB, a recoverable Reliable Broadcast (see \sectionref{sec:RRB}), to send its  value to everyone.

RecRB ensures that:
\def\rrbdef{
\beginsmall{enumerate}
\item[RRB1] If the sender $s$ is correct when it invokes RecRB with value $v$,  then every processor will eventually accept RecRB$(s)$ with the value $v$. 
\item[RRB2] If any  processor accepts RecRB$(s)$ with a value $v'$, then every processor will accept it with the same value.
\item[RRB3] If sender $s$ did not invoke RecRB$(s)$ no processor will accept RecRB$(s)$.
\item[RRB4] If the sender $s$ becomes correct after invoking the protocol, then every processor will eventually accept RecRB$(s)$.
\endsmall{enumerate}
}
\rrbdef

The second part is executed by every processor. In the first round, every processor joins the  RecRB invoked by the sender.  In later rounds processors validate the value, since the value is a set of processors whose messages were accepted in the previous round. We need to ensure that if the RecRB will be accepted, the resulting value will be consistent with the local views of all processors.  To ensure that, if a processor learns that the RecRB was invoked by $s$ it waits until the value sent is contained in its own view of the previous round (\lref{line:cog-ex1a}). It joins the invoked  RecRB  only when this holds.  It joins a single invocation per sender per round. The local testing ensures that the accepted value will eventually enable every processor to apply the accepted value to the local state machine of $SM_s$.  
Since the accepted value may differ from the value first seen by the processor, the participating processor waits again  (\lref{line:cog-acpt}) until its local view becomes consistent with the accepted value.  This is guaranteed to happen eventually, since the accepted value tested by any processor will eventually hold at any other processor.

If $s$ is not correct when it invokes the protocol, the protocol may get blocked. For example, if the local test in (\lref{line:cog-ex1a}) fails at all processors.  If $s$ is recovered (becomes correct) while running this protocol, when it resends its value in  (\lref{line:cog-step0}) the protocol recovers. The third part intends to catch exactly that.   Every time a new processor recovers while running  \cosend,  its  \cosend will complete successfully. The adversary may cause each processor it controls to resend values at most $2t$ times.

Given the above discussion, it is clear that if the sender is correct, all  processors will eventually  complete the protocol and will accept its value.
Moreover, if any  processor accepts a message, every  processor will end up eventually accepting the same message, after accepting and processing all causally prior messages to that message.
Thus, we outlined the proof of the following claim.


\begin{lemma}\label{lem:co}
If $n>3t$ , then
\algref{figure:rbg} implements 
a Causally Ordered Recoverable Reliable Broadcast transport layer in which if a sender $s$ uses  \cosend to send its messages, each  processor accepts messages that satisfy:
\beginsmall{enumerate}
\item[\textup{ CO1:}]
If a correct sender, s, sends a consecutive sequence of messages, then every processor accepts the sequence in the same order that it was sent.

\item[\textup{ CO2:}] For $r>1$, if a  processor, $p$, accepts  a $v$  via $\cosend(r,s)$, then $s\in v$ and $p$ already accepted $v_j$  via $\cosend(r-1,p_j)$,  for every $p_j\in v$.

\item[\textup{ CO3:}] If a  processor, $p$, accepts  a $v$  via $\cosend(r,s)$, 
then every  processor will accept $v$  via $\cosend(r,s)$.


\endsmall{enumerate}
\end{lemma}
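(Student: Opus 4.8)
The plan is to prove the three properties by induction on the round number $r$, establishing CO2 and CO3 simultaneously and then deriving CO1 as a consequence. The base case $r=1$ is immediate: in $\cosend(1,s)$ the causal guards in \lref{line:cog-ex1a} and \lref{line:cog-ex2a} are vacuous (they are gated by $r>1$), so the protocol is exactly RecRB$(s)$. Thus CO2 holds vacuously, and CO3 is precisely the agreement/propagation guarantee RRB2: if any processor accepts then every processor accepts the same value, with liveness supplied by RRB1 and RRB4 when the sender is correct or becomes correct. The engine of the whole argument is the observation that, thanks to the causal gating, the background processing loop (\lref{line:rec-m}--\lref{line:apply-m}) is \emph{always enabled}: an entry $\tri{r,s,\pi}$ is placed into $\M$ only after \lref{line:cog-ex2a} has verified $\tri{r-1,q}\in\bar\M$ for every $q\in\pi$, so the state machines $SM_q[r-1]$ needed to evaluate $\F$ in \lref{line:rec-m} are already computed. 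Hence ``accepted'' ($\in\M$) always ripens into ``processed'' ($\in\bar\M$), and the causal dependency chain is well-founded, grounded at round $1$ which has no guard.

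For the inductive step, assume CO2 and CO3 at round $r-1$. For CO3 at round $r$: if $p$ accepts $v$ via $\cosend(r,s)$ then its RecRB$(s)$ accepted the value $v$ (\lref{line:cog-acpt}) and $p$ passed the guard $\forall q\in v,\ \tri{r-1,q}\in\bar\M$. By RRB2 every processor eventually accepts RecRB$(s)$ with the same $v$; it remains to clear the guard at each such processor. Since $p$ had each $\cosend(r-1,q)$ processed, the induction hypothesis CO3 at $r-1$ forces every processor to accept those $\cosend(r-1,q)$ as well, and by the ``always enabled'' observation they are subsequently processed into $\bar\M$; thus the guard eventually clears everywhere and every processor accepts $v$ via $\cosend(r,s)$. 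For CO2 at round $r$: the clause ``$p$ already accepted $v_j$ via $\cosend(r-1,p_j)$ for every $p_j\in v$'' is read directly off the same guard, since $\tri{r-1,q}\in\bar\M$ means the round-$(r-1)$ message from $q$ was accepted and processed. The remaining clause $s\in v$ follows from the validation in \lref{line:cog-ex1a}: a correct processor joins RecRB$(s)$ only after checking $s\in v$, and since a reliable broadcast accepts only a value that was endorsed by correct participants, any accepted $v$ satisfies $s\in v$ (for a correct sender this is immediate, since the set it broadcasts always contains itself, as enforced by the guard at \lref{line:collect}).

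CO1 then falls out of CO2 applied to a single correct sender: to accept $\cosend(r,s)$ a processor must already have accepted $\cosend(r-1,s)$, because $s\in v$ and $s$ is one of the $p_j$, so the messages of $s$ can only be accepted in increasing round order, i.e.\ FIFO. Liveness of the full sequence --- that every message of a correct (or eventually-correct) sender is indeed accepted --- comes from RRB1 and RRB4 together with the guard-clearing argument above, while the resend mechanism in \lref{line:cog-retest} guarantees that a sender that recovers after $|O|\ge n-t$ re-drives its stalled broadcasts to completion, terminating after at most the stated number of resends.

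The step I expect to be the main obstacle is the liveness of the causal guards in the presence of recovery. One must rule out a deadlock in which $\cosend(r,\cdot)$ waits forever for a round-$(r-1)$ message from a sender that was faulty and is only now becoming correct: this needs RRB4 to guarantee eventual acceptance of the recovered sender's message, the well-foundedness of the round-indexed dependency chain to prevent an infinite regress back through \lref{line:cog-ex1a}, and the termination of the resend loop of \lref{line:cog-retest}. A secondary delicacy is justifying $s\in v$ for a faulty sender, since RRB1--RRB4 treat RecRB as a black box; making this rigorous requires appealing to the endorsement structure of the underlying reliable broadcast --- that no value is accepted unless some correct processor, having passed the \lref{line:cog-ex1a} test, supported it --- which I would state as an explicit additional property of RecRB.
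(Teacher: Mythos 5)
Your proof is correct and, in substance, follows the same mechanism the paper relies on, but it is organized quite differently: the paper never gives a formal proof of this lemma at all --- it explicitly says the preceding prose discussion of \cosend (guards, RRB1--RRB4, the resend mechanism, the bound of $2t$ resends) ``outlined the proof'' --- whereas you recast that discussion as a simultaneous induction on $r$ for CO2 and CO3, with CO1 derived as a corollary of CO2 via $s\in v$. Two of your explicit steps are genuine improvements in rigor. First, your ``always enabled'' well-foundedness observation (an entry enters $\M$ only after \lref{line:cog-ex2a} has verified all round-$(r-1)$ predecessors are in $\bar\M$, so acceptance always ripens into processing, grounded at the guard-free round $1$) is only implicit in the paper, which remarks merely that messages in $\M$ have no causal dependencies among them; your version is what actually closes the liveness loop in the CO3 induction. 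Second, the delicacy you flag about $s\in v'$ for a faulty sender is real: RRB1--RRB4 as stated do not imply that the \emph{accepted} value (which, as the paper notes, may differ from the value a processor first saw) passed anyone's \lref{line:cog-ex1a} test, so the black-box interface is strictly too weak for CO2. The paper silently relies on the RecRB internals --- any accepted value gathered $n-t$ identical underlying reliable broadcasts, of which at least $n-2t\ge t+1$ came from correct participants, each of whom joined only after validating $s\in v$ and the causal guard --- and your proposal to promote this endorsement fact to an explicit fifth property of RecRB is exactly the right repair; with it, your induction goes through as written.
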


The above discussions, \lemmaref{lem:co}, and given that a processor moves to the next round, once it accepts and processes current round messages from a set of $n-t$ processors, imply the following result.

\begin{lemma}\label{lem:BISYNt}
Given  a deterministic protocol 
$P$ that is viewed as a function $\F(M_{r-1},S_p,r)$, for $r\ge 1$, 
in a \BISYNt system,
the emulation protocol presented in \algref{figure:state-sim}  
simulates it at every processor in a \ttbasyn, provided that $n>3t$ .
\end{lemma}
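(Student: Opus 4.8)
The plan is to exhibit, for any admissible execution of \algref{figure:state-sim} in a \ttbasyn, a corresponding legal execution of $P$ in a \BISYNt system that yields, at every processor that is or becomes correct, the identical output. The bridge between the two executions is the family of \cosend messages: the round-$r$ message $\tri{r,p_i,\pi}$ accepted from $p_i$ is read as ``in round $r-1$ of the simulated run, $p_i$ received exactly from the processors in $\pi$.'' Throughout I would lean on \lemmaref{lem:co}: CO3 supplies agreement on what each sender's round-$r$ value is, CO2 supplies the causal scaffolding that makes the local reconstruction well-founded, and CO1 (with RRB1/RRB4) supplies the liveness that an eventually correct sender's stream is accepted everywhere.

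First I would prove \textbf{consistency of the local simulations}: by induction on $r$, every processor that is or becomes correct computes the same $SM_i[r]$ for every $i$. For $r=1$, CO3 guarantees that whenever one processor accepts $\tri{1,p_i,\I}$ all do, and with the same $\I$, so all copies of $SM_i$ are initialized identically (\lref{line:init-rec}). For the step, suppose all correct processors agree on $SM_j[r-1]$ for every $j$. By CO2, before any processor accepts $\tri{r,p_i,\pi}$ it has already accepted the round-$(r-1)$ messages of every $p_j\in\pi$, so the set $M$ of values ``$p_i$ should have received'' (\lref{line:rec-m}) is computed from already-reconstructed and, by the induction hypothesis, identical state machines; since $\F$ is deterministic, $SM_i[r]=\F(M,SM_i[r-1],r)$ (\lref{line:apply-m}) is identical at all correct processors. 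CO3 again guarantees that the triple $\tri{r,p_i,\pi}$ itself is accepted with a common $\pi$, closing the induction.

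Next I would argue \textbf{validity}, i.e.\ that the reconstructed sequence $M_0,M_1,\dots$ is the transcript of a genuine \BISYNt run of $P$. Two constraints of \BISYNt must be matched. (i) Each received-set is a good set: the broadcast value for round $r$ is a snapshot of $accept[r-1]$, emitted only after the guard $|accept[r-1]|\ge n-t$ at \lref{line:collect}, so $|\pi|\ge n-t$ and the unreceived senders form at most a bad set in $\B$ --- exactly the per-processor omission power of \BISYNt. (ii) No phantom senders: the validation in \cosend (\lref{line:cog-ex1a} and \lref{line:cog-ex2a}) admits $\tri{r,p_i,\pi}$ only when $p_i\in\pi$ and every $q\in\pi$ has an already-accepted round-$(r-1)$ message, so $\pi$ never names a processor that failed to ``speak'' in the previous round. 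Moreover, by CO1 a correct sender's round-$(r-1)$ value is eventually accepted everywhere, hence available to be placed into the received-sets, while the adversary can only suppress the messages of (at most a bad set of) faulty senders, never fabricate content. The one genuinely adversarial degree of freedom --- the round-$1$ value $\I$ attributed to a faulty processor may differ from its true input --- is precisely the initial Byzantine-input (Cuckoo) manipulation that \BISYNt permits; thus the reconstructed run is legal for $P$ in \BISYNt with at most $t$ changed inputs.

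Finally, \textbf{liveness and the main obstacle}. Termination follows because the $\ge n-t$ correct processors' \cosend invocations complete everywhere (by RRB1 and CO3), so the guard at \lref{line:collect} is eventually met and rounds advance without bound; a processor that only later becomes correct has its own stream completed by RRB4, letting it catch up and halt with the output $P$ produces in the reconstructed run. I expect the main difficulty to be exactly the consistency step: in the asynchronous system different processors accept and process messages in different real-time orders, yet their independently maintained copies of $SM_i$ must never diverge. The resolution is structural rather than computational --- CO2 decouples \emph{when} a message may be processed from \emph{what} state results, so that causal order, not wall-clock order, governs the reconstruction, and determinism of $\F$ then forces agreement. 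A secondary subtlety worth spelling out is the bounded-recovery accounting (a controlled sender resends its value at most $2t$ times), which must be checked so that each recovering processor's \cosend genuinely terminates rather than being reset indefinitely.
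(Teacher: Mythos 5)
Your proposal is correct and follows essentially the same route as the paper's proof: you map the first-round \cosend values to \BISYNt inputs (with the adversary-controlled processors playing the Cuckoo set), argue by induction on rounds via CO2/CO3 and the determinism of $\F$ that all local copies of each $SM_i$ agree and match the \BISYNt states, and handle liveness of eventually-correct processors through the recovery properties (RRB4, with the bounded $2t$-resend accounting). Your write-up is in fact somewhat more explicit than the paper's sketch --- notably the ``no phantom senders'' validity check and the separation of consistency from legality of the reconstructed run --- but these are elaborations of, not departures from, the paper's argument.
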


\proof
To prove the claim what we need to show is a mapping of processors and inputs from a \ttbasyn to the corresponding  processors and inputs in a \BISYNt system; and show that every run in the \ttbasyn corresponds to a possible run in the \BISYNt system and that  correct processors in the \ttbasyn  obtain the same output the corresponding processors obtain in the \BISYNt system.


Let  $B$  be the set of Byzantine processors the adversary initially controls.  
Let $G=\Pi\setminus B$  be the appropriate set of correct processors.
We map the processors' IDs the same in both systems, and we will consider those processors controlled by the adversary in the \ttbasyn as the Cuckoo set,  those receiving a bad input in the \BISYNt system.

Assign to a processor in the \BISYNt model the input value a correct processor obtained  from the respective  \cosend in \lref{line:step1} of \algref{figure:state-sim}, when running it in the \ttbasyn (processors that never take an action in the \ttbasyn are part of $B$ and will be considered as processors in the \BISYNt model whose messages never arrive to any other processor).
Thus, at the beginning of the first phase, once the inputs are introduced, the state of the corresponding  processors are the same. 

The set of IDs each processor receives (and accepts) in the first round is the set it sends when executing \lref{line:send}.
The state of each processor $p$ at the end of round 1 is the state it obtains when processing its round 1 message when executing \lref{line:apply-m}.
Observe that every correct processor eventually complete the first round and send its initial input.
Thus, at the end of round 1, the local state of  processor $p$ is the same state its state in \BISYNt model at the end of round 1.
Observe that, by \lemmaref{lem:co}, any other processor that computes locally the state of processor $p$, when it processes processor's $p$ round 1 message when executing \lref{line:apply-m} will obtain the same state at its local state machine of processor $p$.

Now, by induction, we can show that the set of processors each processor $p$ accepts in round $r$ is what it sends in \lref{line:send}, contains a set of $n-t$ processors.  Its state  after processing the round $r$ messages (at the beginning of round $r+1$) is what every processor obtains when processing its broadcasted message when executing  \lref{line:apply-m}.

When the protocol in the \BISYNt model instructs a  processor $p$  to produce an output the SM at every processor will  produce in the \ttbasyn the identical output. Observe that in our model all  processors will run all state machines of all initially  correct processors, and thus will produce the outputs. Once a SM produces an output at any processor, it eventually produces the same output at every other processor. Once a  processor learns about an output  (executing \lref{line:halted} of \algref{figure:state-sim})  it updated its list of outputs. 
Thus, once all initially correct processors output, every processor learns about that, and will try to recover.  One of them will recover and will reach an output state. The process repeats itself, and eventually all processors output on all state machines.
\endproof

\section{Recoverable Reliable Broadcast}\label{sec:RRB}

Traditional Reliable Broadcast ensures properties [RRB1] - [RRB3], below.
The adversary may cause a traditional Reliable Broadcast protocol to block when it is invoked by a processors it controls, by sending conflicting values to different processors.  Our aim is to ensure that once a processor recovers, the protocol it previously invoked while being controlled by the adversary will recover and complete. To obtain that we extend the properties of Reliable Broadcast to be:

\rrbdef
%

The Recoverable Reliable Broadcast protocol presented in this section (\algref{figure:rrb}) obtains these properties.  The protocol uses as a building block the classical Reliable Broadcast protocol  (\algref{figure:rb}, see~\sectionref{sec:RB}), that was developed by Bracha~\cite{Bracha:1984}, and was used by many researchers ever since.

\begin{algorithm}[!ht]
\footnotesize
\SetNlSty{textbf}{}{.}
\setcounter{AlgoLine}{0}
$O$ is a globally maintained sets \hfill\textit{/* executed by processor $p$ with  sender  $s$, invoked once per round per sender */}\\

 \ \\
 \lnl{line:rrb-def}  \     {\bf set}  $H[k]=\emptyset$ for any $k \ge 0$\hspace{2mm} \hfill\textit{/*  $H[k]$ includes all accepted RB of round $k$ */}\\

\ \\
{\bf Sender's Protocol:}\\
\nl \tb $k:=0$; $v:=v_s$;\\

\lnl{line:rrb-loop1} \tb   {\bf repeat }\\ 
\nl \due $k:=k+1$;\\
\lnl{line:rrb-sendall} \due {\bf send} $\ang{v,k,H[k-1]}$  to all;\mbox{\ }\hfill\textit{/* $s$ is the sender and $v$ the value it attempts to broadcast */}\\
\lnl{line:rrb-loop1} \due   {\bf wait until}  $| H[k]|\ge n-t$ \mbox{\ }\hfill\textit{/* $n-t$ different RB were accepted */}\\
\lnl{line:rrb-loop2} \tre  {\bf if}  $\exists $ unique $\bar v, $ s.t. $|\{q\mid (q,\bar v)\in H[k]\}|\ge t+1$ {\bf then} $v:=\bar v$ {\bf else} $v:=v_s$;\mbox{\ }\hfill\textit{/* either $\bar v$ or the input value */}\\
\lnl{line:rrb-loop3}\tb {\bf until}  $\exists k',v', $ s.t. $|\{q \mid (q,v')\in H[k']\}|\ge n-t$ \mbox{\ }\hfill\textit{/* $n-t$ different RB were accepted with a value $v$ */}\\
\ \\
{\bf Any Processor's Protocol:}\hfill\textit{/* the sender also run this protocol */}\\
\lnl{line:rrb-exec1}    \tb {\bf case}  {\bf received} $\ang{v,k,H'[k-1]}$ from $s$:  \\
\lnl{line:rrb-wait-to-join} \due {\bf wait until} $H'[k-1]\subset H[k-1]$ \hfill\textit{/* the sender also run this protocol */}\\
\lnl{line:rrb-wnew-rb} \tre {\bf if} (did not invoked RB for $k$) and ($\not\exists v'\not=v$   s.t. $|\{q\mid (q,v')\in H'[k-1]\}|\ge t+1$) \\
\lnl{line:rrb-wnew-rb2} \tre\tb {\bf then} invoke RB$(k,p)$ with value $v$ ;\hfill\textit{/* executed at most once per $k$  */}\\
\ \\
\lnl{line:rrb-exec2}    \tb {\bf case}  {\bf accepted}  $\bar v$ via RB$(k,q)$: \\
\lnl{line:rrb-add-rb}  \due add $(q,\bar v)$ to $H[k]$; \\
\ \\
\lnl{line:rrb-exec3}   \tb {\bf case}  
 $\exists k',v', $ s.t. $|\{q \mid (q,v')\in H[k']\}|\ge n-t$: \mbox{\ }\hfill\textit{/* $n-t$ different RB were accepted with a value $v$ */}\\
\lnl{line:rrb-acpt} \due{\bf accept}  RecRB$(s)$ with value $v'$.\hfill\textit{/*   accept message $v'$ as the message sent by processor $s$  via RecRB  */} \\
\ \\
 {\bf Case} A new processor is added to $O$, \ie outputs:\mbox{\ }\hfill\textit{/* Try to recover */}\\
\lnl{line:rrb-retest}\tb {\bf if} $|O|\ge n-t$ {\bf then} sender $s$ repeats the last send of \lref{line:rrb-sendall}.\mbox{\ }\hfill\textit{/* repeat the last sending */}\\
\caption[caption]{ RecRB$(s)$: Recoverable Reliable Broadcast \\\hspace{\textwidth}\mbox{\ \hspace{0.841in}} with asynchronous  system that satisfies  the Byzantine Fault predicate}\label{figure:rrb}
\end{algorithm}

The challenge is in ensuring  that despite the degree of freedoms the adversary has in trying to confuse processors by changing the content of messages it can't block  recovered processors from making progress. The high level idea of the protocol is that the sender will repeatedly push forward a value via a sequence of sending attempts, until any of the attempts collects support for the same value from at least $n-t$ processors. Each attempt  brings every processor to invoke a Reliable Broadcast to send the value it received from the sender for that attempt. 
Each participant maintains a data structure in which it collects the accepted Reliable Broadcasts and the values it accepted for each attempt.

The protocol is composed of three parts.  In the first part the sender sends in each iteration a value and the set ($H[k-1]$) of accepted Reliable broadcasts of the previous iteration (\lref{line:rrb-sendall}). In the first iteration this set is the empty set. The processor waits (\lref{line:rrb-loop1}) until it accepts at least $n-t$ Reliable Broadcasts.  Once this happens it determines (\lref{line:rrb-loop2}) the value it should send in the next attempt. The value chosen is either a unique value appearing at least $t+1$ times or the original value the sender started the protocol with.  Notice that if the sender was correct when it invoked the protocol the value is always the  value it started with. The sender repeats this process until for any attempt (\lref{line:rrb-loop3}) there are $n-t$ identical values.  In such a case it can stop, and it knows that every processor will end up seeing that value in at least one iteration, as we argue below.

The second part is  executed by every participant, including the sender.  When a processor receives from the sender (\lref{line:rrb-exec1}) a value and a set of the accepted Reliable Broadcasts at the previous iteration, as claimed by the sender, it carries various validation tests prior to invoking the next Reliable Broadcast. It first waits until the set of Reliable Broadcasts if accepts ($H[k-1]$) contains the set ($H'[k-1$) of the sender.  Next it checks whether the value received from the  sender does not contradict  the set $H'[k-1]$. If both hold and the processor did not already invoke a Reliable Broadcast with index $k$ it invokes one. The processor collects all accepted Reliable Broadcasts in its data-structure $H[k]$, whenever any is accepted, in the context of the current RecRB (\lref{line:rrb-exec2}), until it identifies that its data structure contains at least $n-t$ identical values for some iteration (\lref{line:rrb-exec3}). Once this happens it accepts that value as the sender's value of the invoked RecRB. Because of recovering, a processor may get more than a single message per iteration (at most $t$ such messages in total).  It invokes the corresponding reliable broadcast on at most one of these messages.

Observe that if any processor accepts a value for the current RecRB, that value will be accepted by every processor.  The reason is that the Reliable Broadcast properties imply that what  one data structure contains, every data structure will eventually contain.  Moreover, we argue that there will not be two different values accepted.  Let $\bar k$ be the minimal index for which any processor eventually collects at least $n-t$ accepted Reliable Broadcasts for some value $\bar v$, then the value being sent by the sender in iteration $\bar k +1$ can contain only $\bar v$.  If it differs from that value (due to adversarial manipulation), then it will fail the tests of  \lref{line:rrb-wait-to-join} and \lref{line:rrb-wnew-rb} at any processor, and would never be sent by any processor. 

The third part intends to help the sender to complete the protocol in the case it recovers and its previous message caused the participating processors not to invoke the corresponding Reliable Broadcasts. When it identifies that the number of processors that gave output is more than $n-t$, it knows that there is a chance it will be recovered; and in such a case it repeats the last sending each time it learns about another processor outputting.

Observe that if the sender was correct before the invocation of  RecRB, then eventually every processor will see $n-t$ identical copies of its value in the first iteration, or in any later iteration.  Also notice, that if any processor accepts the RecRB of the sender, everyone will eventually accept it with the same value. Thus, proving the 4 properties of RecRB.

\section{Simulating a protocol in a \BIMObb system}

To finalize the main result of the paper we will now expand the simulation from simulating a 
protocol $P$ that runs is a  \BISYNt  system to a protocol in a   \BIMObb  system.  The extension is to ensure that before a processor adapts the $accept$ set it communicates with others to converge to $accept$ sets such that all sets have at least $n-t$ processors in common.
To achieve that we introduce another element,  we run a couple of  rounds of the equivalent to a full information message exchange to make sure that everyone shares messages from a set of at least $n-t$ processors.  Once this happens the processor takes its next step.

\ccore, is an adaptation of the Get-Core approach mentioned in~\cite{DistComp} (attributed to the second author) and a variation of it that was later presented in~\cite{Abraham2010} as Binding Gather, and using ideas from~\cite{Abraham2005}.

Each processor invokes the \ccore protocol, appearing in \algref{figure:ccore},  with a set of at least $n-t$  different processors IDs. Each processor, $p$,  returns as an output a set of at least $n-t$  different processors' IDs, such that at least $n-t$ of them are shared by the outputs of all  processors. The \ccore  properties are:

\beginsmall{itemize}
\item {\em Validity}: At each  processor, the output set of IDs contains the input set of IDs.
\item {\em Commonality}: There exists a set of $n-t$ IDs  that appears in the output set of every  processor.
\item {\em Termination}: All correct processors eventually output some non-empty set of IDs.
\endsmall{itemize}

A set that is in every output set is called a common core. 
The \ccore primitive is  described in  \algref{figure:ccore}. In the first round, everyone sends its $accept[r]$ set.  
In the background the processor continues to update its $accept[r]$ set with messages it continues to accept.
To complete the first round of \ccore, it waits to receive at least $n-t$ sets that are contained in its current state of the set $accept[r]$.  This will eventually happen due to the \cosend properties, and the fact that there are at least $n-t$ correct processors.  Once this happens it sends again its current set and waits again to received at least $n-t$ second round sets that are contained in its current state of the set.  

Observe that a recovering processor does not need to repeat the last sending as  it did in  previous protocols..

\begin{algorithm}[!ht]
\footnotesize
\SetNlSty{textbf}{}{.}
\setcounter{AlgoLine}{0}
 \ \hfill\textit{/* the input set $accept[r]$ is updated contineously in the background according to the messages accepted  via \cosend and processed in  \algref{figure:state-sim} */}\\
 \ \\
\lnl{line:cc-step1} \  {\bf step 1}  $send(r,1,accept[r])$ to all;\mbox{\ }\hfill\textit{/* send the input set to all, a processor sends also to itself */}\\
\lnl{line:cc-wait1} \  \tb  {\bf wait until} $|\{j\mid received(r,1,\pi_j) \mbox{ from } p_j, \mbox { and } \pi_j\subseteq accept[r] \}|\ge n-t;$\\
\ \\
\lnl{line:cc-step2} \  {\bf step 2}  $send(r,2,accept[r])$ to all;\mbox{\ }\hfill\textit{/* the set $accept[r]$ is being continuously updated in the background */}\\
\lnl{line:cc-wait2} \  \tb  {\bf wait until} $|\{j\mid received(r,2,\pi_j) \mbox{ from } p_j, \mbox { and } \pi_j\subseteq accept[r] \}|\ge n-t;$\\
\ \\
\lnl{line:cc-return}       {\bf return}  $accept[r]$. \\
\caption[caption]{$\ccore(accept[r])$, the Common Core protocol  }\label{figure:ccore}
\end{algorithm}

The correctness proof is similar to the proof of get-core in \cite{DistComp}. The original proof did not include Byzantine nodes, therefore we need to change it a bit.  Define a table $T$ with $n-t$ raws and $n-t$ columns, that refer to a set $G$ of $n-t$  correct  processors.  
The $accept[r]$ value of each correct processor contains at least $n-t$ IDs, therefore it contains at least $n-2t\ge t+1$ IDs of processors in $G$ represented in $T$.  
For $p_i,p_j\in G$, entry $T[i,j]$ in the table is $1$ if $p_j$ is one of the $n-t$ processors that $p_i$ waited for in order to complete the first round of \ccore, and $0$ otherwise.  Observe that if $1$ appears in  entry $T[i,j]$, the $accept_i[r]$ sent by $p_i$ in the second round contains all the $n-t$ IDs appearing in 
the initial $accept_j[r]$ sent by $p_j$  in the first round of the \ccore protocol.

Since all correct processors will eventually invoke \ccore, $T$ will contain  at least $(n-t)(t+1)$ entries with $1$. This implies that there is an ID of a correct processor, say $\bar p$, that appears in at least $t+1$ raws.  
Thus, there are at least $t+1$ correct processors whose second round set includes the $n-t$ IDs that appear in the initial set of $\bar p$.  
Before completing the protocol, each processor waits to get the sets of $n-t$ processors, so it includes the set of at least one of these $t+1$ processors, thus includes the set of $n-t$  IDs appearing in the initial list of $\bar p$.

\begin{lemma}\label{lem:ccore}
For $n>3t$, the protocol presented in \algref{figure:ccore}  
implements the \ccore properties. 
\end{lemma}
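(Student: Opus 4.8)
The plan is to verify the three \ccore properties separately, leaning on the \cosend guarantees of \lemmaref{lem:co} for liveness and adapting the get-core counting argument of \cite{DistComp} to the Byzantine/Cuckoo setting for the commonality property.

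\textbf{Validity and Termination.} Validity is immediate: each processor returns the value of $accept[r]$, which in the background is only ever enlarged by adding processor IDs, so the returned set contains the input set handed to \ccore. For termination I would argue that both waiting conditions (\lref{line:cc-wait1} and \lref{line:cc-wait2}) are eventually met at every correct processor. By \lemmaref{lem:co} (CO3), every message a correct processor accepts via \cosend is eventually accepted by all, so the $accept[r]$ sets of the at least $n-t$ correct processors keep growing toward common content; hence each correct processor eventually receives $n-t$ step-1 (resp.\ step-2) sets $\pi_j$ that are contained in its own current $accept[r]$, and proceeds.

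\textbf{Commonality (the main obstacle).} The crux is to exhibit a single set of $n-t$ IDs present in every output. I would restrict attention to a fixed set $G$ of $n-t$ correct processors and build the $(n-t)\times(n-t)$ table $T$ indexed by $G$, where $T[i,j]=1$ iff $p_i$ waited for $p_j$ when completing step~1. The key Byzantine-specific point is that, although $p_i$ waits for $n-t$ sets, at most $t$ of them come from outside $G$, so at least $n-2t\ge t+1$ of the $1$'s in each row land inside $G$; thus $T$ has at least $(n-t)(t+1)$ ones. Averaging over the $n-t$ columns yields a column $\bar p\in G$ with at least $t+1$ ones, i.e.\ at least $t+1$ correct processors whose step-2 set contains all $n-t$ IDs of $\bar p$'s initial set. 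Finally, any processor completing \ccore waits for $n-t$ step-2 sets; since $(n-t)+(t+1)>n$, this collection intersects the $t+1$ distinguished processors, so its returned $accept[r]$ absorbs $\bar p$'s $n-t$ IDs. These $n-t$ IDs are therefore the common core.

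I expect the delicate part to be making the monotonicity-plus-containment reasoning watertight. Because $accept[r]$ is mutated asynchronously in the background, I must argue that the containment test $\pi_j\subseteq accept[r]$ used to admit a set in step~1 still certifies, at the later moment of return, that the admitted IDs sit inside the returned $accept[r]$ (they do, since $accept[r]$ only grows), and that \cosend's consistency (CO2--CO3) prevents a Byzantine sender from contributing IDs that never get processed. With these observations the table argument of \cite{DistComp} goes through once specialized to $G$.
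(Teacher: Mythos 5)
Your proposal is correct and follows essentially the same route as the paper's own proof: the same monotonicity argument for Validity, the same appeal to the \cosend properties and the $n-t$ correct processors for Termination, and for Commonality the identical adaptation of the get-core table argument --- the $(n-t)\times(n-t)$ table over a fixed correct set $G$, the per-row bound $n-2t\ge t+1$, the pigeonhole on columns yielding $\bar p$ with $t+1$ distinguished rows, and the final intersection count $(n-t)+(t+1)>n$. Your explicit remark that the containment test together with the monotone growth of $accept[r]$ justifies the table's key property matches the paper's observation about entry $T[i,j]=1$, so nothing is missing.
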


To obtain the  final protocol we add the \ccore invocation to the simulation protocol presented in \algref{figure:state-sim}.
We invoke the \ccore protocol on all the  $accept$ sets of a given round after completing \lref{line:collect} and before executing \lref{line:send} of \algref{figure:state-sim}. The output of the \ccore is used in \lref{line:send}  as the set of processors from which we received messages from in that round. \algref{figure:full-state-sim} presents the complete protocol.

\begin{algorithm}[!t]
\footnotesize
\SetNlSty{textbf}{}{.}
\setcounter{AlgoLine}{0}
\lnl{line:f-def}  \     {\bf set}  $\forall k$ $accept[k]:=\emptyset$; \hspace{2mm} \hfill\textit{/*  the sets of accepted senders at the given rounds; executed at processor $p$ */}\\
\lnl{line:f-def2}  \       {\bf set}   $\M:=\emptyset;$ $\bar\M:=\emptyset$;\hspace{2mm} \hfill\textit{/* the set of accepted  messages that were not processed yet , and the set of processed messages */}\\
\lnl{line:f-def3}  \     {\bf set}   $O:=\emptyset;$ ;\hspace{2mm} \hfill\textit{/* the set of processors that their SM reached an output at the processor */}\\
\ \\
\lnl{line:f-init-r} \    $r:=1;$\hfill\textit{/* the round number */}\\
\lnl{line:f-step1} \     {\bf invoke }  $\cosend(r,p)$ to broadcast $\I$;\mbox{\ }\hfill\textit{/* broadcast the input value, a processor sends also to itself */}\\

\ \\
\nl \    {\bf do }  {\bf until} $SM_p$ halts:\hfill\textit{/* participate in all $\cosend(\ell,*),$ $\ell \le r$, protocols */}\\
\lnl{line:f-collect}\     \tb {\bf wait until } $|accept[r]| \ge n-t $ and $p\in accept[r]$; \\
\lnl{line:f-p-core}\  \tb  $accept[r]:= \ccore(accept[r])$ \hfill\textit{/* {\bf the 2 rounds protocol to converge to shared $n-t$} */}\\
\lnl{line:f-inc}\    \tb $r:=r+1$;\\
\lnl{line:f-send} \    \tb {\bf invoke } $\cosend(r,p)$ to broadcast $accept[r-1]$;\mbox{\ }\hfill\textit{/* broadcast the accepted and processed set in round $r-1$ */}\\%

\lnl{line:f-end} \    {\bf end.} \\
\ \\
 {\underline {\bf In the Background:}} \\
\lnl{line:f-r1}   Execute  for each $\tri{r',p_i,\pi}\in\M$: \hfill\textit{/*   accepted message from $p_i$ for round $r'$   */} \\
\lnl{line:f-init-rec}    \tb {\bf if} $r'=1$ {\bf then start } $SM_i$ with input $\pi$;\hfill\textit{/* start a SM with the initial input */}\\
\lnl{line:f-init-rec2}     \tb {\bf if} $r'>1$ {\bf then } \\
\lnl{line:f-rec-m}  \tb\tb {\bf let} $M:=\{m_j\mid p_j\in\pi \mbox{ and } SM_j[r'-1] \mbox{ sends } m_j \mbox{ to } p_i\}$; \hfill\textit{/* the values $p_i$ should have received */}\\
\lnl{line:f-apply-m}  \tb\tb$SM_i[r']:=\F(M,SM_i[r'-1],r')$; \hfill\textit{/* apply  protocol $\F$ to determine the next state of $SM_i$ */}\\
\lnl{line:f-halted}  \tb\tb {\bf if } $SM_i$ outputs {\bf then} $O:=O\cup \{p_i\}$; \hfill\textit{/* Denote the outputting processor */}\\
\lnl{line:f-r3}    \tb $\M:=\M\setminus \tri{r',p_i,\pi}$;\\
\lnl{line:f-r4}     \tb $\bar\M:=\bar\M\cup \tri{r',p_i}$;\\
\lnl{line:f-add}       \tb   $accept[r']:=accept[r']\cup\{p_i\}$. \\
\caption[caption]{Simulating a \BIMObb system deterministic protocol  \\\hspace{\textwidth}\mbox{\ \hspace{0.83in}} in an asynchronous  system that satisfies  with $n>3t$.}\label{figure:full-state-sim}
\end{algorithm}

\begin{theorem}\label{thm:full}
Given  a deterministic protocol 
$P$ that is viewed as a function $\F(M_{r-1},S_p,r)$, for $r\ge 1$, 
in a \BIMObb system,
the protocol presented in \algref{figure:full-state-sim}  
simulates it  in a \ttbasyn, provided that $n>3t$.
\end{theorem}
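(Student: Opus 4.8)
The plan is to build directly on \lemmaref{lem:BISYNt} and \lemmaref{lem:ccore}, treating \algref{figure:full-state-sim} as nothing more than \algref{figure:state-sim} with the single insertion of the \ccore call at \lref{line:f-p-core}, sandwiched between the round-completion wait (\lref{line:f-collect}) and the broadcast of the accepted set (\lref{line:f-send}). Since \lemmaref{lem:BISYNt} already establishes that the unmodified simulation reproduces, at every processor and in a causally consistent way, the round-by-round behaviour of $P$ under a \BISYNt adversary, I would argue that the only thing left is to show that the extra \ccore phase upgrades the per-round guarantee from ``each processor's broadcast accept set has size at least $n-t$'' (the \BISYNt condition) to the strictly stronger \BIMObb condition that all these sets share a common core of $n-t$ processors.

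First I would fix a round $r$ and apply \lemmaref{lem:ccore}. \emph{Validity} ensures each processor's post-\ccore set $accept[r]$ still contains its pre-\ccore set, hence still has at least $n-t$ IDs; \emph{Commonality} gives a set $C_r$ of $n-t$ IDs contained in the $accept[r]$ of every processor; and \emph{Termination} guarantees the extra phase always completes, so progress is preserved. I would then read $C_r$ as the \BIMObb good set of round $r$: since every processor's broadcast set is a superset of $C_r$, every processor's set of \emph{missing} senders lies inside $B'_r := \Pi \setminus C_r$, a single set of cardinality at most $t$ (so $B'_r \in \B$ when $n>3t$), identical for all processors. That is exactly a legal round of the mobile adversary, which selects $B'_r$ and deletes some of the outgoing messages of $B'_r$.

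Next I would verify that the insertion does not disturb the causal machinery underlying the state-machine simulation. The crucial point is that \ccore never injects foreign IDs: by Validity its output is the processor's own continuously-updated $accept[r]$, every element of which is a genuinely accepted-and-processed round-$r$ \cosend message; moreover the core $C_r$ itself consists of such genuine IDs. Hence when $p$ broadcasts this enlarged set via $\cosend(r+1,p)$, property CO2 of \lemmaref{lem:co} remains satisfiable (each named sender's round-$r$ message has been accepted by $p$, and by CO3 will be accepted by everyone), and CO3 still forces all processors to accept the identical broadcast set of each simulated processor $x$. Because \lref{line:f-rec-m}--\lref{line:f-apply-m} compute $x$'s transition from the \emph{sender's} broadcast set rather than from the local one, the machines $SM_x$ remain pairwise consistent across processors exactly as in \lemmaref{lem:BISYNt}. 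With this in hand I would reuse the mapping of \lemmaref{lem:BISYNt} verbatim --- identifying the adversary's initially controlled set $B$ in the \ttbasyn with the Cuckoo set $B_{initial}$ whose inputs are corrupted, and inputs taken from \lref{line:f-step1} --- and conclude by the same round induction that every asynchronous run projects onto a legal \BIMObb run of $P$ with matching outputs, closing with the identical recovery argument: once $n-t$ state machines output, each further recovering processor reaches its own output, so eventually all processors output on all machines.

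The hard part will be the middle step: rigorously certifying that the Commonality core $C_r$ is not merely a shared subset but can legitimately serve as the mobile adversary's single per-round good set, while simultaneously checking that the \emph{different} post-\ccore sets at distinct processors --- which agree only on $C_r$ and may differ outside it --- never cause two processors to simulate divergent transitions for the same $SM_x$. The resolution I would stress is that divergence is impossible precisely because each transition is driven by the uniquely-accepted broadcast value of the acting processor (CO3), not by the receivers' private accept sets; the receivers' sets only govern \emph{whose} messages are present in a given round, which is exactly the degree of freedom a \BIMObb adversary is permitted.
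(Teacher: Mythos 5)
Your proposal is correct and follows essentially the paper's own route: the paper gives \theoremref{thm:full} no separate proof, intending exactly the composition you carry out --- insert \ccore between \lref{line:f-collect} and \lref{line:f-send}, use the \emph{Commonality} property of \lemmaref{lem:ccore} to confine each round's missed senders to the single set $\Pi\setminus C_r$ of cardinality $t$ (upgrading the per-processor good sets of \BISYNt to the single mobile bad set of \BIMObb), and then reuse the mapping and round induction from the proof of \lemmaref{lem:BISYNt} verbatim, including the recovery argument. Your added care about CO2/CO3 and about transitions being driven by the sender's uniquely accepted broadcast set matches (and slightly sharpens) the paper's argument rather than departing from it.
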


\begin{corollary}\label{col:full}
For $n>3t$ and deterministic protocols, 
\ttbasyn and 
 \BIMObb system are equivalent.
\end{corollary}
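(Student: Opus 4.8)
The plan is to read ``equivalent'' as mutual emulation—each model's protocols can be simulated by protocols of the other—so that the two systems solve exactly the same set of tasks (modulo the user's chosen resolution of adversarially supplied inputs, as discussed in the problem statement). One inclusion is immediate: \theoremref{thm:full} already exhibits, for every deterministic protocol $P$ written for a \BIMObb system, an explicit emulation of $P$ in a \ttbasyn (\algref{figure:full-state-sim}). Hence every task solvable in \BIMObb is solvable in \ttbasyn, and this side needs no further work.

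For the reverse inclusion I would take a deterministic protocol $P$ that solves a task in a \ttbasyn and run it, unchanged, inside a \BIMObb system using the standard full-information device: in every synchronous round each processor retransmits the entire history of messages it has ever sent, and a processor advances its simulated asynchronous step only once the accumulated receipts cover a good set of $n-t$ senders. The two flavors of \BIMObb adversarial power then map onto legitimate \ttbasyn behavior. The mobile per-round omissions realize an asynchronous delivery schedule, since accumulation guarantees that a message from a sender that is not permanently suppressed is eventually delivered, while a permanently suppressed set of at most $t$ senders plays exactly the role of the delayed/attacked processors in the asynchronous system. The initial Cuckoo input substitution on a set $B_{initial}\in\B$ is precisely a \emph{restricted} Byzantine behavior—a processor that otherwise follows $P$ faithfully but was started on an altered input. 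Since $P$ tolerates arbitrary Byzantine behavior on any set in $\B$, it in particular tolerates this restricted behavior, so the induced execution is a valid \ttbasyn run and $P$ meets the task specification when run in \BIMObb.

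Assembling the two inclusions, and checking that the correspondence preserves the participating/good set $G=\Pi\setminus B$ on both sides so that the same processors produce the same outputs, yields the equivalence. Concretely I would proceed in the order: (i) invoke \theoremref{thm:full} for the $\BIMObb\Rightarrow\ttbasyn$ direction; (ii) define the resend-all emulation together with the map sending the Cuckoo set to the Byzantine set; (iii) verify that every \BIMObb run is reproduced as a legal \ttbasyn run and conversely; and (iv) conclude same-task-solvability, hence equivalence.

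The step I expect to be the main obstacle is not the existence of the simulations—one direction is \theoremref{thm:full} and the other is the elementary resend-all construction—but reconciling the \emph{progress and recovery semantics} across the two directions. In \BIMObb all obedient processors are required to reach the protocol's objective, whereas in \ttbasyn progress is the output-parametrized, recovery-based guarantee that once $n-t$ outputs are delivered an additional attacked processor ``becomes correct.'' The care needed is to confirm that the live/good set of $n-t$, augmented one processor at a time as outputs accumulate, is matched on both sides, so that the set of processors that eventually output—and hence the very notion of task solvability being equated—is identical in the two models.
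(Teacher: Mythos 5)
Your proposal is correct and follows essentially the same route as the paper: the $\BIMObb\Rightarrow\ttbasyn$ direction is exactly \theoremref{thm:full} via \algref{figure:full-state-sim}, and the reverse direction is the same elementary full-information device the paper relies on (each round retransmit all previously sent messages, so mobile per-round omissions collapse to asynchronous delays and the Cuckoo input substitution is a legal, restricted Byzantine behavior that $P$ already tolerates). Your closing observation about matching the output-parametrized progress/recovery semantics across the two directions is a fair point of care, and it is handled in the paper by the recovery argument at the end of the proof of \lemmaref{lem:BISYNt}, which carries over to the \ccore-augmented protocol.
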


\begin{theorem}\label{thm:both}
For $n>3t$  and deterministic protocols, 
\ttbasyn is  equivalent  to
\ttoasyn in a Cuckoo model (\ie in which the adversary only replaces  the inputs to some $t$ processors).
\end{theorem}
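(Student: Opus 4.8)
My plan is to derive \theoremref{thm:both} by composition rather than by a fresh from-scratch simulation: chain the Byzantine-to-synchronous correspondence already in hand, \corollaryref{col:full} (which gives \ttbasyn $\equiv$ \BIMObb), with the purely benign correspondence of \cite{AG15} between the asynchronous $t$-resilient model and the synchronous $t$-mobile omission model, carrying the Cuckoo input-replacement through as an orthogonal one-time pre-processing. Since \BIMObb is a \emph{synchronous} mobile-omission system whose sole residue of maliciousness is the initial replacement of the inputs of a set $B_{initial}\in\B$, it suffices to link \BIMObb to the asynchronous Cuckoo system \ttoasyn, and \cite{AG15} is exactly the benign bridge between mobile-synchronous omission and fixed-set asynchronous omission.

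First I would argue that the correspondence of \cite{AG15} is \emph{input-oblivious}: it maps processor IDs identically and rearranges only the per-round message-delivery pattern (turning a fixed asynchronous omission set into a per-round mobile set and back), never inspecting or altering the value a processor holds. Hence it commutes with the initial input-replacement, and the set $B_{initial}$ whose inputs are altered in \BIMObb is transported verbatim to a set of at most $t$ processors whose inputs are altered in \ttoasyn --- precisely the Cuckoo caveat. This yields \BIMObb $\equiv$ \ttoasyn, and chaining with \corollaryref{col:full} gives \ttbasyn $\equiv$ \ttoasyn.

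To honor the paper's notion of equivalence (each protocol in one model emulated by a protocol in the other) I would write out both directions. Forward: a protocol $P$ solving a task in \ttbasyn becomes, by \corollaryref{col:full}, a \BIMObb protocol $P'$, and then, by \cite{AG15}, an asynchronous benign protocol $P''$; run in \ttoasyn, $P''$ faces only benign omissions plus the same initial input change, which it already tolerates, so every non-starved processor outputs what $P$ would output on the (possibly replaced) inputs. Reverse: a \ttoasyn protocol is transported by \cite{AG15} into \BIMObb and then, by \theoremref{thm:full}, run in \ttbasyn, where the Recoverable Reliable Broadcast and \ccore machinery confines the Byzantine adversary's extra power to exactly that same input replacement.

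The step I expect to be the main obstacle is checking that the two reductions compose \emph{with respect to the output-parametrized progress (recovery) condition}, not merely with respect to static task solvability. These models carry a nonstandard staged liveness guarantee --- once $n-t$ outputs are delivered one further attacked processor ``becomes correct,'' then another, and so on --- so I would need to confirm that the benign bridge of \cite{AG15} preserves this staged release across the fixed/mobile omission translation and that it meshes round-for-round with the recovery already engineered in \corollaryref{col:full} through Recoverable Reliable Broadcast. Once the release of a delayed/``crashed'' processor on the asynchronous benign side is matched by the corresponding release on the mobile-synchronous side, the composed simulation inherits the required staged progress and the equivalence follows.
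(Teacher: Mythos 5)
Your composition is exactly the paper's (implicit) argument: the paper states \theoremref{thm:both} with no separate proof, deriving it by chaining \corollaryref{col:full} (\ttbasyn $\equiv$ \BIMObb) with the benign asynchronous-to-mobile-synchronous correspondence of \cite{AG15}, treating the initial input replacement as an orthogonal residue carried through unchanged. Your closing worry about composing the output-parametrized (staged recovery) progress condition is a legitimate subtlety that the paper itself glosses over rather than resolves, so your proposal is, if anything, more careful than the paper's treatment.
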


For the synchronous model with a fixed set of faults we obtain a similar result. To obtain the result we do not need the \ccore part and \algref{figure:state-sim} is enough.\footnote{One can somewhat optimize  the protocols in the synchronous case, but conceptually the simulation remains the same.}

\begin{theorem}\label{thm:sync}
For $n>3t$  and deterministic protocols.
A synchronous Byzantine fault system is equivalent to a 
synchronous system in which  
the adversary replaces  the inputs to some $t$ processors and outgoing messages from them  it may delete in the first round, and in any future round it may delete outgoing messages from some set of $t$ processors.
\end{theorem}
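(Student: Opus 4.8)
The plan is to prove the equivalence by exhibiting a simulation in each direction, re-using the machinery already built for the asynchronous case but exploiting synchrony to shed two ingredients. Concretely, I would show (i) that any protocol $P$ written for the synchronous mobile Cuckoo model can be run in the standard synchronous Byzantine model---the synchronous analogue of \lemmaref{lem:BISYNt} and \theoremref{thm:full}---and (ii) the converse, which is the easy resending direction. For (i) the footnote's two simplifications must be justified: first, since the faulty set is fixed and no processor ever ``recovers,'' the recoverable reliable broadcast inside \cosend can be replaced by the plain reliable broadcast of \cite{Bracha:1984}, so that properties CO1--CO3 of \lemmaref{lem:co} still hold while the third ``repeat last send'' part of \algref{figure:rrb} is dropped; second, the \ccore round of \algref{figure:full-state-sim} becomes unnecessary, so \algref{figure:state-sim} alone suffices.

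For direction (i) I would run \algref{figure:state-sim}, with plain reliable broadcast substituted inside \cosend, in the synchronous Byzantine model. In round $1$ each processor reliably broadcasts its input; the $\le t$ faulty processors may have the adversary broadcast an arbitrary value, and by agreement of reliable broadcast all correct processors adopt the same such value as that processor's simulated input, which is exactly the Cuckoo replacement of at most $t$ inputs. In each later round a processor broadcasts only its $accept$ set and locally drives the $n$ state machines via $\F$. The key step is to observe that synchrony furnishes the common core for free: in each synchronous round every correct sender is heard by every correct receiver, so reliable broadcast makes every correct processor place the whole correct set $G=\Pi\setminus B$, with $|G|\ge n-t$, into its $accept[r]$. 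Hence the missed senders of every correct processor, in every round, lie inside the single fixed set $B$, and content tampering has been confined to the round-$1$ inputs. The run of $P$ thereby induced is therefore a legal run of the synchronous mobile Cuckoo model (omission set $B$ each round, a special case of the mobile choice, with round-$1$ omission equal to the input-replaced set $B$), so by correctness of $P$ each correct processor's own state machine halts with a task-legal output.

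For direction (ii) I would take any synchronous Byzantine protocol $Q$ and run it in the mobile Cuckoo model under full-information resending: each processor retransmits its entire accumulated history every round. Here the coupling in the theorem statement is used---because the round-$1$ omission set equals the input-replaced set $B_{initial}$, every processor with an unaltered input is heard by everyone in round $1$, so the only processors about which uncertainty remains are the $\le t$ members of $B_{initial}$. Mobile omissions in later rounds are repaired by resending (at least $n-t$ senders get through each round, and already-delivered round-$1$ information is permanently forwarded), so from the viewpoint of the $\ge n-t$ unaltered processors the execution is indistinguishable from an execution of $Q$ in which exactly the set $B_{initial}$ behaves Byzantinely---input replacement being a special case of arbitrary behavior. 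Since $Q$ tolerates any $t$ Byzantine processors, its guarantees carry over, giving solvability in the mobile Cuckoo model. Combining (i) and (ii) yields the equivalence, exactly as \corollaryref{col:full} follows from \theoremref{thm:full} in the asynchronous case.

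I expect the main obstacle to be the correctness argument underlying the two simplifications in direction (i): one must verify that dropping recovery and \ccore does not invalidate the simulation, i.e.\ that the message losses experienced by correct processors in the synchronous Byzantine execution are genuinely realizable as a single omission set per round and that the only surviving adversarial degree of freedom is the round-$1$ input choice. This is precisely where synchrony is essential, and the delicate point is that a faulty processor may still equivocate in the low-level reliable broadcasts; one must argue that because processors transmit only $accept$ sets, agreement of reliable broadcast collapses every such equivocation either to a single committed set or to a harmless non-acceptance, so that the global pattern seen by the correct processors is always a legal mobile-Cuckoo pattern.
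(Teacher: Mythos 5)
Your direction (i) is, in substance, the paper's entire proof: the paper dispatches \theoremref{thm:sync} in two sentences, asserting that \algref{figure:state-sim} without the \ccore round suffices, with a footnote that the protocol ``can be somewhat optimized'' in the synchronous case --- your replacement of RecRB by Bracha's plain RB is exactly that optimization, and your observation that synchrony confines every correct receiver's missed senders to the one fixed Byzantine set $B$ (so the \ccore convergence step is redundant) is the intended, unstated justification. One caveat you should make explicit: ``every correct sender is heard by every correct receiver'' holds only if processors advance simulated rounds on the real-round clock (a full RB window per simulated round), not merely upon reaching the quota $|accept[r]|\ge n-t$ as \lref{line:collect} literally says. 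Otherwise a faulty sender's RB that completes early at a correct processor $p$ can fill $p$'s quota before a correct sender's RB completes, so $p$'s round-$r$ missed set contains a \emph{correct} ID, and the union of missed senders over receivers no longer fits in a single set of size $t$. With clock-paced advancement all correct RB instances complete inside the window at all correct processors and your argument goes through.

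The genuine gap is in direction (ii). Your claim that mobile omissions in rounds $\ge 2$ are ``repaired by resending,'' making the run ``indistinguishable from an execution of $Q$ in which exactly the set $B_{initial}$ behaves Byzantinely,'' is false under the mobile reading of the theorem. Resending converts an omission into a \emph{delay}, and a synchronous protocol $Q$ cannot consume a round-$r$ message that arrives after round $r$: the receiver's round-$(r+1)$ state was already computed without it, so in $Q$'s eyes the sender omitted in round $r$ and is faulty. Since the deletion set may move, the set of processors exhibiting such omissive behavior grows over the run beyond $t$, exceeding $Q$'s tolerance; indistinguishability fails the moment the adversary deletes from any processor outside $B_{initial}$. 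Note that the asynchronous template you are copying (the paper's one-line remark before \BIMObb) works precisely because asynchrony legalizes delays; synchrony does not, and this is the one place where the asynchronous argument cannot be imported. Under the fixed-set reading --- which matches the paper's own framing, ``the synchronous model with a \emph{fixed} set of faults'' --- direction (ii) is immediate \emph{without} any resending: input replacement plus omissions confined to the single fixed set is a sub-adversary of $t$ Byzantine processors, so $Q$ runs unmodified. Under the literal mobile reading your step needs a genuinely different idea, and there is reason to doubt one exists: by the reduction of \cite{AG15}, synchronous $t$-mobile omission behaves like $t$-resilient asynchrony, where consensus is unsolvable, whereas the synchronous Byzantine model with $n>3t$ solves it. You should therefore prove (ii) for the fixed-set model and say so, rather than rely on the resend argument.
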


\section{Reliable Broadcast}\label{sec:RB}

In this section we present the traditional Reliable Broadcast protocol using the generalized faulty set convention.  

\hide{
The standard tool to transform asynchronous Byzantine processor to appear as reliable one is to use Reliable-Broadcast \cite{}. Alas, the reliable-broadcast protocol to not allow for recovery once the Byzantine behavior stops. A major contribution of this paper is a variation of reliable-broadcast, that allows for a processor that exhibited faulty behavior to recover once the failures affecting its links stopped.

In an instance of a Byzantine model, what a collection of sets $\B$ of which the adversary can choose one to control, does not destroy the system?

Traditionally, researchers assumed that that collection contains any subset of size $t$ or less, and studied the ratio of $t$ to $n$ that enables solving various problems.
It was determined that for $t \geq \lceil n/3 \rceil$ no meaningful computation is possible in the byzantine case \cite{Agree80,Agree82}. The generalization we suggest, following \cite{HirtM97,AFM99} in the synchronous case, enables having some faulty sets of larger size
than third of the processors (that might be caused by  internal fault correlation in the system), as long as some limits are satisfied.  

When no meaningful problem can be solved in the traditional asynchronous message-passing model for a collection $\B$? If two sets $A$ and $B$ are such that $A \cup B= \Pi$, then no meaningful computation is possible as the complements of the two sets have to proceed independently causing ``network partition.'' Thus, any collection of sets such that no pair of sets covers all processors, allow for some meaningful computation.
In the Byzantine case, if the complement of the union of the two sets $C$ is also in $\B$ then we have
again a ``network partition'' as $C$ cannot be relied upon to be the connection between the complements of $A$ and $B$..
Thus as noted in \cite{}, 
for meaningful computation the collection of set should satisfy that no triplet of sets covers all processors, i.e. this are no $A,B$ and $C$ in $\B$ is such that $A \cup B \cup C = \Pi$.
}

\begin{definition}[Collection of Sets of Potentially Faulty Processors One of which can be Chosen by the Adversary]
The set $\B$, called the collection of \emph{bad-sets}, is closed under inclusion, i.e.
if $B \in \B$ and $B' \subseteq B$, then $B' \in \B$. 
Let $\G$ be $\{{\Pi \setminus B}~|~B \in \B\}$.
\beginsmall{itemize}
\item[B1:] $\B$ satisfies the Benign Fault predicate if: 
 any 2 potentially bad sets of processors $B_i,B_j\in\B$, satisfy $\Pi\not\subseteq B_i\cup B_j$. 
\item[B2:] $\B$ satisfies the Byzantine Fault predicate if: 
any 3 potentially bad sets $B_i,B_j,B_k\in\B$, satisfy  $\Pi\not\subseteq B_i\cup B_j\cup B_k$. 
\endsmall{itemize}
\end{definition}

Thus, the only difference between the Benign Fault predicate and the Byzantine Fault predicate is the number of sets that do not cover the entire set of processors.

\begin{corollary}\label{cor:bg-sets}
The definition of the potentially Faulty sets that satisfy the Byzantine Fault predicate implies:
\beginsmall{itemize}
\item[G1:] For every $G\in\G$, and any  $B_i,B_j\in\B$, it holds that $G\not\subseteq B_i\cup B_j$.
\item[G2:] For any 2 potentially good sets $G_i,G_j\in\G$, it holds that $G_i\cap G_j\subsetneq B$, $B\in\B$.
\endsmall{itemize}
\end{corollary}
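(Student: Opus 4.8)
The plan is to derive both G1 and G2 directly from the Byzantine Fault predicate B2 by exploiting the complement relationship $\G = \{\Pi \setminus B \mid B \in \B\}$, which converts every statement about good sets into a statement about three bad sets of the form appearing in B2. Each part is a one-line contrapositive argument, so I expect no real obstacle beyond reading the slightly garbled notation of G2 as the assertion that $G_i \cap G_j \not\subseteq B$ for every $B \in \B$.

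For G1, I would fix $G \in \G$ and write $G = \Pi \setminus B_k$ for the witnessing $B_k \in \B$. Assuming toward a contradiction that $G \subseteq B_i \cup B_j$ for some $B_i, B_j \in \B$, I take the union of both sides with $B_k$ to obtain $\Pi = G \cup B_k \subseteq B_i \cup B_j \cup B_k$. Since $B_i, B_j, B_k$ are three members of $\B$, this says $\Pi \subseteq B_i \cup B_j \cup B_k$, directly contradicting B2. Hence $G \not\subseteq B_i \cup B_j$.

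For G2, I would write $G_i = \Pi \setminus B_i$ and $G_j = \Pi \setminus B_j$ and apply De Morgan to get $G_i \cap G_j = \Pi \setminus (B_i \cup B_j)$. Suppose for contradiction that $G_i \cap G_j \subseteq B_k$ for some $B_k \in \B$. Then every processor lies in $B_i \cup B_j$ or in $B_k$, i.e. $\Pi = (B_i \cup B_j) \cup (G_i \cap G_j) \subseteq B_i \cup B_j \cup B_k$, again contradicting B2. This establishes that the intersection of two good sets is never swallowed by a single bad set.

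The only subtlety worth flagging is bookkeeping about which three bad sets play the role of $B_i, B_j, B_k$ in B2: in G1 two of them are the given bad sets and the third witnesses $G$, whereas in G2 two witness $G_i, G_j$ and the third is the hypothetical containing set. Both reductions are otherwise immediate, and the analogous (weaker) facts for the Benign predicate B1 would follow identically using only two bad sets in place of three.
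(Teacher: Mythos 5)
Your proof is correct and takes essentially the same route as the paper: both parts are the identical one-line contrapositive reduction to B2, with $\Pi\setminus G$ (resp.\ $\Pi\setminus G_i$ and $\Pi\setminus G_j$) supplying the witnessing bad sets, the only cosmetic difference being that in G2 the paper keeps $G_i\cap G_j$ itself as the third covering set (implicitly using subset-closure of $\B$) where you substitute the containing $B_k$ directly. You also correctly read the garbled statement of G2 as $G_i\cap G_j\not\subseteq B$ for every $B\in\B$, which is exactly what the paper's proof establishes.
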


\begin{proof}
Proving G1: If G1 does not hold then $\Pi\subseteq B_i\cup B_j\cup\{\Pi\setminus G\}$, a contradiction.\\
Proving G2: If G2 does not hold then $\Pi\subseteq \{\Pi\setminus G_i\}\cup\{\Pi\setminus G_j\}\cup\{G_i\cap G_j\}$, a contradiction, since all sets belong to $\B$.
\end{proof}

The Reliable Broadcast properties   are:
\beginsmall{enumerate}
\item[RB1] If the sender $s$ is correct when it invokes RB with a value $v$,  then every processor will eventually accept RB$(s)$ with the value $v$. 
\item[RB2] If any (correct) processor accepts RB$(s)$ with a value $v'$, then every processor will accept it with the same value.
\item[RB3] If sender $s$ did not invoke RB$(s)$ no processor will accept RB$(s)$.
\endsmall{enumerate}

Notice that in our model there is no need to use the word ``correct'' in stating RB2.

\begin{algorithm}[!ht]
\footnotesize
\SetNlSty{textbf}{}{.}
\setcounter{AlgoLine}{0}
\  \hfill\textit{/* executed by processor $p$ with  sender  $s$, invoked once per $k$ per sender */}\\

 \lnl{line:rb-def}  \     {\bf set}  $V:=\emptyset;$\hspace{2mm} \hfill\textit{/*  the set of $m_1$ and $m_2$  messages received,  each processor also sends  messages to itself */}\\

\ \\
\lnl{line:rb-step0}    {\bf Init:}  {\bf if}  $p=s $ {\bf then}  send $v_s$ to all;\mbox{\ }\hfill\textit{/* $s$ is the sender and $v_s$ the value it broadcasts */}\\

\ \\
\lnl{line:rb-exec1}    {\bf Upon receiving a protocol message:} \\
\lnl{line:rb-exec2}    \tb {\bf case}  received $v$ from $s$:  
send $m_1(v)$ to all;\hfill\textit{/* executed at most once per protocol invocation */}\\
\lnl{line:rb-exec2a}    \tb {\bf case}  received $m$   from $q$: 
 add $m$ to $V$;\hfill\textit{/* add this $m1$ or $m2$ message to $V$ */}\\
\lnl{line:rb-exec3}   \tb {\bf case}  $V$ contains $m_1(v)$ from  a set $G$, $G\in\G$  {\bf or} $m_2(v)$ from  a set $G'$, $G'\not\subseteq B$, for any $B\in\B$: \\
\due    send $m_2(v)$ to all;\hfill\textit{/* send at most once per protocol invocation */}\\

\ \\
\lnl{line:rb-exec4}   \tb {\bf case}  $V$ contains $m_2(v)$ from   a set $G\in\G$: \hfill\textit{/* process the sender's message */}\\
\lnl{line:rb-acpt} \due{\bf accept}  RB$(k,s)$ with value $v$.\hfill\textit{/*   accept message $v$ as the message sent by processor $s$  with index $k$  */} \\
\caption[caption]{ RB$(k,s)$: Reliable Broadcast \\\hspace{\textwidth}\mbox{\ \hspace{0.841in}} with asynchronous  system that satisfies  the Byzantine Fault predicate}\label{figure:rb}
\end{algorithm}

The Reliable Broadcast protocol, \algref{figure:rb}, is invoked per processor per an index $k$ per sending round and consists of 4 conceptual steps.   
Initially (step 1, \lref{line:rb-step0}) the sender of the current instance of the protocol sends its initial value to everyone.  
Thus, everyone should wait to receive the appropriate initial value. 
Due to asynchrony it may take time, but without faults, it would eventually arrive to everyone.  
Because of maliciousness, the message may not arrive to every processor.  Moreover, conflicting values might be sent to different processors.  The following steps intend to address exactly these difficulties.

If a processor receives an initial value (step 2, \lref{line:rb-exec2}) it notifies every processor by sending  $m_1(v)$ message.
Malicious behavior may cause different processors to send  $m_1$ messages for different values. 
Each processor sends at most a single  $m_1$ message per invocation of the protocol (per round).
A processor may receive several $m_1$ messages, even if it did not receive an initial value. 

In  step 3 (\lref{line:rb-exec3}), a processor that has received 
 identical copies of  $m_1$ messages (\ie for the same value) from a set 
 $G\in\G$, sends an $m_2$ message. 
Notice that if the original sender is correct, this will eventually happen at every correct processor.
Observe that by \corollaryref{cor:bg-sets} no two correct processors send  $m_2$ messages with conflicting values, since the protocol instructs a correct processor to send at most a single  $m_1$ message, and
no two correct processors will get  identical copies of $m_1$ messages for different values from different set in $\G$.
Notice that a processor may receive several  $m_2$ messages without receiving enough identical copies of $m_1$ messages from a set  $G\in\G$.
If it receives  $m_2$ identical messages from a set
$G'$, $G'\not\subseteq B$, for any $B\in\B$,
it knows that at least one correct processor have sent one, so it can also join in  by sending an  $m_2$ message (potentially skipping step 2 on the way).

To complete  the protocol (step 4, \lref{line:rb-exec4}) a processor  waits to receive   $m_2$ identical messages
from a set $G\in\G$. 
Once it receives that many identical $m_2$ messages, and
since any correct processor will either send the identical message or none,
it knows that eventually every processor will receive 
the $m_2$ message from at least  a set $G'$, $G'\not\subseteq B$, for any $B\in\B$.
Any such correct processor will send an  $m_2$ message to everyone else, which leads to everyone eventually receiving identical  $m_2$ messages from a set  $G\in\G$. 
Thus, if any correct processor will agree to accept the message, eventually every correct processor will accept and process all the prior messages and will accept the message.

Notice that the way \lref{line:rb-exec3} is presented it implies scanning an exponential number of sets - but it is clear that it is enough to consider maximal sets, and the complexity becomes the number of maximal sets.



\section{Related Work}

The attempts of reducing  Byzantine to  Benign have a long history. Back in 1988 Coan \cite{coan88} considered running an asynchronous protocol  written for the benign setting on a machine in which processors might fail in a Byzantine manner. To deal with inputs he assumed that the inputs satisfy some predicate, and then required the protocol to be written in a special form. This allowed him to check ``backward compatibility'' and discard incorrect messages. But, it was not clear whether any asynchronous protocol can be put in the form he required.

We, here, use the same idea as that of Coan, only that we do not require the special protocol form required by Coan. 
We check ``backward compatibility'' just by asking a processor to send the set of processors it read from. Since it is sent via Reliable Broadcast it allows the rest of the processors to delay accepting the message until they simulate locally all the processors claimed in the message and then derive the interpreted value themselves. In hindsight we believe Coan method can be tweaked to ours. Ours might be just an inductive full-information version of Coan ideas.

As for general results reducing the Asynchronous Byzantine to the Benign such a reduction was derived recently in
\cite{MTH14}, and rediscovered in \cite{IRS16}. But both reductions were explained only for \emph{colorless} tasks.
They might apply to all tasks, but we will speculate that the fact that they could not ascribe meaning to output of ``byzantine processors'' held them back. In fact, some of the author of \cite{MTH14} were involved in formulating the beautiful task of Multi-Dimensions Vector $\epsilon$-agreement \cite{MHVG15}. The task is motivated by the presence of Byzantine processors, but its computability is possible to analyze in the Benign model according to the Cuckoo model. Nevertheless its computability in \cite{MHVG15} is handled in the Byzantine setting with all the complications and clutter  it introduces.
Reference \cite{MTH14} is mute about that possibility, although we view it as a ``killer application'' to the benefit of the Cuckoo Model.

Thus the central contribution of our paper is conceptual.
By making all processors ``correct'' and allow them to join the computation once
the Byzantine attack ceases we are forced to think a complete analogy between
the Byzantine and the Benign modulo changed inputs.
The ``byzantine processors'' cluttered the view of researchers, hence for instance, we are the first to notice that the Asynchronous Adversary model of 2009 \cite{DFGT11}
applies verbatim to the Byzantine model. Missing the ``correct'' model might have cause researchers to miss results which are essentially in plain view.

In between Coan \cite{coan88} and \cite{MTH14} there were several attempts to simplify the $t$ resilient asynchronous Byzantine model  through ideas of simulating simpler models.  Attiya and Welch~\cite{DistComp} reduced the problem to Identical Byzantine.
The pioneering work of Bracha~\cite{Bracha:1984,Bracha:1987} was focused on improving the probabilistic protocol of Ben-Or~\cite{BenOr83} from $n/5$ to $n/3$ and in order to do so Bracha developed a basic tool to limit the power of the Byzantine adversary,  The simulation we introduce in the paper makes use of this tool as part of the building block we introduce.
Srikanth and Toueg~\cite{Srikanth87} considered simulating the power of a signature scheme to limit the Byzantine adversary, both in a synchronous system and an asynchronous one. Neiger and Toueg~\cite{neiger88} introduced direct simulations between models in order to solve consensus, but their simulations are limited to synchronous models.

\hide{
No attempt that we know of has been done to analyze the Adversary Correlated Faults in the Asynchronous Byzantine model. We again assume researchers had this blind-spot as a result of the ``byzantine processors.'' In this paper we just show the possibility of results derived in the benign setting to be imported to Byzantine. Thus we do not need to review the history of the development of the Adversary model.
}

Last but not least as mentioned in the introduction our motivating papers were \cite{AG15} and \cite{GG11,DFGK15}.
Reference \cite{GG11} implies that processors $t$-resiliently jointly can march any number (greater than $t$) of state machines forward with at most $t$ not progressing. In \cite{DFGK15} state-machines are considered to be read-write threads
(and the commands to be what a processors proposes a a ``read'' value based on its
internal simulation), hence a protocol. Hence it makes an execution a group effort - our main motivating observation as to how this might apply to the Byzantine.
The \cite{GG11} is a generalization of Schneider~\cite{FTState90} using a state machine  for implementing 
fault-tolerant services.  It generalizes \cite{FTState90} from when consensus is available to the case when set-consensus is available rather than consensus.
The $t$ resilience allows for $k+1$ set-consensus.

\hide{
A by-product of our result is that  asynchronous Byzantine model  can be analyzed  as a normal $t$-resilient system. 
Albeit, we do have to contend with inputs that might be inconsistent and ascribe some meaning, if possible, to the outputs. 
Two recent results are striking in this regard. In \cite{MHVG15} the authors have formulated an ingenious task whose inputs are points $R^d$ and asked as a function of $t$ and $n$ whether the problem of $\epsilon$-agreement that is in
the convex-hull of every $n-t$ inputs is solvable in  asynchronous Byzantine model. 
At the instigation of the second author who felt that the problem has nothing to do with the Byzantine setting, 
in \cite{TV14} the same problem was analyzed in the benign case by some of the original authors in \cite{MHVG15} and lo and behold they obtained the same results. 
Our paper here shows this was not a fluke.
In fact, to give more credence to our contention that we are not the only ones that
should be surprised by the correspondence of the Byzantine $t$ faults and the $t$-resilient systems, in \cite{MTH14},
the authors use topological tools to show  an equivalence of tasks solvable in Byzantine asynchronous systems and
asynchronous $t$-resilient system.
Our result differs from concurrent work~\cite{IRS16} in that we do not make the message transmission of a thread in a given round sequential. Thus, we achieve asymptotically optimal speed of the simulation, as opposed to the slowdown of $ \Theta(n) $ incurred by~\cite{IRS16}.

The main challenge in obtaining the result is to show that any deterministic protocol that solves a problem in the $t$-mobile message adversary synchronous system can be simulated by a protocol running in the traditional $t$ resilient asynchronous Byzantine  system.  
Naturally, the adversary seems to have much more liberty in the traditional $t$ resilient Byzantine model, since it can delay $t$ correct processors, and arbitrarily control the $t$ Byzantine processors.
Traditionally, researchers developed methods to exchange the set of values being received from each processor in order to ensure consistency at every stage of the protocol. Such methods achieve consistency, but still enable a faulty processor to claim to receive values that were not sent, and extra methods are required to test consistency over the history of the protocol
(as in our discussion of \cite{coan88} below). 

Our approach is to eliminate sending values beyond the initial inputs. Instead, each processor sends to others the list of processors it heard from in the previous round.  Sending only this list drastically reduces the ability of the adversary to influence the state of the rest of the processors.  With that in mind, the protocol is simulated locally at each processor.  Each processor can determine which values each other processor needs to send and receive in each round of the original (simulated) protocol, given the set of processors it claims to hear from. 
Schneider~\cite{FTState90} suggested using a state machine  for implementing 
fault-tolerant services.  His approach is to instruct all replicas to run the same state machine, and to agree on the inputs to the replicas.  We take the idea further and instruct each processor to simulate the protocols of all other processors, and our simulation protocol ensures that all replicas at all correct processors apply the same sequence of steps.  

There were several attempts to simplify the $t$ resilient asynchronous Byzantine model  through ideas of simulating simpler models.  Attiya and Welch~\cite{DistComp} reduced the problem to Identical Byzantine.
The pioneering work of Bracha~\cite{Bracha:1984,Bracha:1987} was focused on improving the probabilistic protocol of Ben-Or~\cite{BenOr83} from $n/5$ to $n/3$ and in order to do so Bracha developed a basic tool to limit the power of the Byzantine adversary,  The simulation we introduce in the paper makes use of this tool as part of the building block we introduce.
Srikanth and Toueg~\cite{Srikanth87} considered simulating the power of a signature scheme to limit the Byzantine adversary, both in a synchronous system and an asynchronous one. Neiger and Toueg~\cite{neiger88} introduced direct simulations between models in order to solve consensus, but their simulations are limited to synchronous models.

Coan~\cite{coan88} technique comes the closest to our result.
Coan was interested in taking a given algorithm written for the asynchronous fail-stop model with $t<1/3$ and running
it in an environment of $t$ Byzantine processors. This is a more ambitious goal that what we present here.
At first, cut the ramification of our paper is ``for tasks that are \emph{immune} to a change of at most $t$ inputs,
whatever is solvable asynchronously with $t$-benign faults is solvable with $t$ Byzantine faults.''

Coan pays for the ambition. The algorithm for the fail-stop environment has to be written in a specific form that is
not shown to be universal and encompass every protocol, and then run through a compiler to validate ``message correctness,''
and ``filter out'' incorrect messages. Thus, in hindsight, we believe Coan ideas can be tweaked to get our result,
but this is in hindsight. 
The results were surprising to us as it should be to most researchers evidenced by the 
recent duplicate works mentioned above.
Coan does address the falsified input question and assumes some ``correctness predicate.''
We leave it to the protocol designer to address the question what to do with a combination of inputs that is
not a valid input combination.

Last but not least, we do not address the question for the randomized environment when processors flip coins 
(remember, our processors are correct, only that the communication subsystem interfere with them). Coan faces the
question then how to define ``correct message.'' We do not face this problem since ours is full information and
about the ``communication pattern.'' In that case, we need all processors to agree on individual processor's claimed coin
output, and we still face the problem that decided coin values might be biased, unlike the fail-stop case.
Nevertheless, as we show in the appendix if we go down to $t <1/4~ n$ we can deal with randomized
algorithms too. The case of $t <1/3~ n$ is an open question.

}



\section{Conclusions}

We presented a new view on the Byzantine model. 
In this view processors are not taken over by a malicious Adversary.
The Adversary takes over the outgoing communication attempted by some processors.
For all practical purposes the observable behavior is the same.
Yet theoretically it allows us to consider ``reviving'' a processor after an attack ceases
as throughout the execution it snooped on the system progress.
Thus, this raised the challenge as to whether there is a way to completely
reduce the Byzantine to the Benign with no qualifications.

This paper present a way to do this. It is not prefect.
We assume that faulty behavior will not be experienced further at some processor without output,
\emph{immediately} after a new output is delivered past $n-t-1$ outputs.
Perfection would mean to replace this assumption of ``immediately'' with ``eventually.''
We leave it as an open question whether this can be done
(e.g. by changing the assumption that the adversary can remove a message, to the one where the adversary can delay a message ad infinitum). Recovery necessitates that a processor trying revival will resend a message. We discarded various schemes 
of ``eventual'' as we could not bound the number of these extra resend messages.

Yet, with this somewhat flawed assumption, that nevertheless does not affects ``observable behavior,''
we are able to reduce the extreme of Asynchronous Byzantine, to the other extreme of the 
Synchronous Benign. What is left is to show that such reductions are true for other models e.g.
those that utilize objects in their communications. Indeed, implicitly such a reduction has 
been shown in \cite{G2016} 
for objects of set-consensus types, since there, objects are replaced by a restriction on runs.

It is now about making the reduction in this paper efficient, and finding ``killer applications,''
beyond \cite{MH13,VG13,MHVG15}. 
Another direction is turning a probabilistic asynchronous consensus algorithms,
into probabilistic algorithms where the faults are benign.

The emulations discussed in this paper assume deterministic protocol.  In order to extend the emulations above to randomized protocols, one can't just ask a processor to distribute together with its collected set its random choice for the current round. The reason is that the adversary can make the Byzantine processors to not draw the random bits from the expected distribution.  To deal with that all processors collectively can choose the random choices for each processor once it broadcasts its candidates' set.
In case $n>4t$ one can compute any probabilistic function using Asynchronous Multi-Party-Computation,  in the presence of up to $t$ Byzantine faults, see~\cite{Ben-Or:1993,BH2007}.  One can use that as a building block in our protocols. The question of doing that for $n>3t$ is an open problem.



\newpage
{
\bibliographystyle{plain}
\bibliography{bibliography}

\begin{thebibliography}{10}

\bibitem{Koheleth}
The {B}ook of {E}cclesiastes, 1:9.

\bibitem{AL91}
Mart\'{\i}n Abadi and Leslie Lamport.
\newblock The existence of refinement mappings.
\newblock {\em Theor. Comput. Sci.}, 82(2):253--284, May 1991.

\bibitem{Abraham2010}
Ittai Abraham, Marcos~K. Aguilera, and Dahlia Malkhi.
\newblock Fast asynchronous consensus with optimal resilience.
\newblock In Nancy~A. Lynch and Alexander~A. Shvartsman, editors, {\em
  Distributed Computing: 24th International Symposium, DISC 2010, Cambridge,
  MA, USA, September 13-15, 2010. Proceedings}, pages 4--19, Berlin,
  Heidelberg, 2010. Springer Berlin Heidelberg.

\bibitem{Abraham2005}
Ittai Abraham, Yonatan Amit, and Danny Dolev.
\newblock Optimal resilience asynchronous approximate agreement.
\newblock In Teruo Higashino, editor, {\em Principles of Distributed Systems:
  8th International Conference, OPODIS 2004, Grenoble, France, December 15-17,
  2004, Revised Selected Papers}, pages 229--239, Berlin, Heidelberg, 2005.
  Springer Berlin Heidelberg.

\bibitem{AG15}
Yehuda Afek and Eli Gafni.
\newblock A simple characterization of asynchronous computations.
\newblock {\em Theor. Comput. Sci.}, 561:88--95, 2015.

\bibitem{AFM99}
Bernd Altmann, Matthias Fitzi, and Ueli~M. Maurer.
\newblock Byzantine agreement secure against general adversaries in the dual
  failure model.
\newblock In Prasad Jayanti, editor, {\em Distributed Computing, 13th
  International Symposium, Bratislava, Slavak Republic, September 27-29, 1999,
  Proceedings}, volume 1693 of {\em Lecture Notes in Computer Science}, pages
  123--137. Springer, 1999.

\bibitem{DistComp}
H.~Attiya and J.~Welch.
\newblock {\em Distributed Computing: Fundamentals, Simulations and Advanced
  Topics}.
\newblock John Wiley \& Sons, 2004.

\bibitem{BH2007}
Zuzana Beerliov{\'a}-Trub{\'i}niov{\'a} and Martin Hirt.
\newblock {\em Simple and Efficient Perfectly-Secure Asynchronous MPC}, pages
  376--392.
\newblock Springer Berlin Heidelberg, Berlin, Heidelberg, 2007.

\bibitem{BenOr83}
Michael Ben-Or.
\newblock Another advantage of free choice (extended abstract): Completely
  asynchronous agreement protocols.
\newblock In {\em PODC '83: Proceedings of the second annual ACM symposium on
  Principles of distributed computing}, pages 27--30, New York, NY, USA, 1983.

\bibitem{Ben-Or:1993}
Michael Ben-Or, Ran Canetti, and Oded Goldreich.
\newblock Asynchronous secure computation.
\newblock In {\em Proceedings of the Twenty-fifth Annual ACM Symposium on
  Theory of Computing}, STOC '93, pages 52--61, New York, NY, USA, 1993. ACM.

\bibitem{Bracha:1984}
Gabriel Bracha.
\newblock An asynchronous [(n - 1)/3]-resilient consensus protocol.
\newblock In {\em Proceedings of the Third Annual ACM Symposium on Principles
  of Distributed Computing}, PODC '84, pages 154--162, New York, NY, USA, 1984.
  ACM.

\bibitem{Bracha:1987}
Gabriel Bracha.
\newblock Asynchronous byzantine agreement protocols.
\newblock {\em Information and Computation}, 75(2):130 -- 143, 1987.

\bibitem{coan88}
B.~A. Coan.
\newblock A compiler that increases the fault tolerance of asynchronous
  protocols.
\newblock {\em IEEE Transactions on Computers}, 37(12):1541--1553, Dec 1988.

\bibitem{IRS16}
Imbs Damien, Raynal Michel, and Stainer Julien.
\newblock Are byzantine failures really different from crash failures?
\newblock In {\em {DISC 16}}, pages 215--229, 2016.

\bibitem{DFGK15}
Carole Delporte{-}Gallet, Hugues Fauconnier, Eli Gafni, and Petr Kuznetsov.
\newblock Wait-freedom with advice.
\newblock {\em Distributed Computing}, 28(1):3--19, 2015.

\bibitem{DFGT11}
Carole Delporte-Gallet, Hugues Fauconnier, Rachid Guerraoui, and Andreas
  Tielmann.
\newblock The disagreement power of an adversary.
\newblock {\em Distributed Computing}, 24(3):137--147, 2011.

\bibitem{Rapprox}
D.~Dolev, N.~A. Lynch, E.~Stark, W.~E. Weihl, and S.~Pinter.
\newblock Reaching approximate agreement in the presence of faults.
\newblock {\em J. of the ACM}, 33:499--516, 1986.

\bibitem{DG16}
Danny Dolev and Eli Gafni.
\newblock Some garbage in - some garbage out: Asynchronous t-byzantine as
  asynchronous benign t-resilient system with fixed t-trojan-horse inputs.
\newblock {\em CoRR}, abs/1607.01210, 2016.

\bibitem{G2016}
Petr~Kuznetsov Eli~Gafni, Yuan~He and Thibault Rieutord.
\newblock Read-write memory and k-set consensus as an affine task.
\newblock In {\em Proc. of 20th Int. Conference on Principles of Distributed
  Systems ({OPODIS'03})}, 2016.

\bibitem{GG11}
Eli Gafni and Rachid Guerraoui.
\newblock Generalized universality.
\newblock In Joost{-}Pieter Katoen and Barbara K{\"{o}}nig, editors, {\em
  {CONCUR} 2011 - Concurrency Theory - 22nd International Conference, {CONCUR}
  2011, Aachen, Germany, September 6-9, 2011. Proceedings}, volume 6901 of {\em
  Lecture Notes in Computer Science}, pages 17--27. Springer, 2011.

\bibitem{GK11}
Eli Gafni and Petr Kuznetsov.
\newblock Relating \emph{L}{\textbackslash}mathcal\{L\}-resilience and
  wait-freedom via hitting sets.
\newblock In Marcos~Kawazoe Aguilera, Haifeng Yu, Nitin~H. Vaidya, Vikram
  Srinivasan, and Romit~Roy Choudhury, editors, {\em Distributed Computing and
  Networking - 12th International Conference, {ICDCN} 2011, Bangalore, India,
  January 2-5, 2011. Proceedings}, volume 6522 of {\em Lecture Notes in
  Computer Science}, pages 191--202. Springer, 2011.

\bibitem{HR10}
Maurice Herlihy and Sergio Rajsbaum.
\newblock The topology of shared-memory adversaries.
\newblock In {\em Proceedings of the 29th Annual {ACM} Symposium on Principles
  of Distributed Computing, {PODC} 2010, Zurich, Switzerland, July 25-28,
  2010}, pages 105--113, 2010.

\bibitem{HirtM97}
Martin Hirt and Ueli~M. Maurer.
\newblock Complete characterization of adversaries tolerable in secure
  multi-party computation (extended abstract).
\newblock In James~E. Burns and Hagit Attiya, editors, {\em Proceedings of the
  Sixteenth Annual {ACM} Symposium on Principles of Distributed Computing,
  Santa Barbara, California, USA, August 21-24, 1997}, pages 25--34. {ACM},
  1997.

\bibitem{Agree82}
L.~Lamport, R.~Shostak, and M.~Pease.
\newblock The byzantine generals problem.
\newblock {\em ACM Transactions on Programming Languages and Systems},
  4(3):382--301, 1982.

\bibitem{MH13}
Hammurabi Mendes and Maurice Herlihy.
\newblock Multidimensional approximate agreement in byzantine asynchronous
  systems.
\newblock In {\em Proceedings of the Forty-fifth Annual ACM Symposium on Theory
  of Computing}, STOC '13, pages 391--400, New York, NY, USA, 2013. ACM.

\bibitem{MHVG15}
Hammurabi Mendes, Maurice Herlihy, Nitin Vaidya, and Vijay~K. Garg.
\newblock Multidimensional agreement in byzantine systems.
\newblock {\em Distributed Computing}, 28(6):423--441, 2015.

\bibitem{MTH14}
Hammurabi Mendes, Christine Tasson, and Maurice Herlihy.
\newblock Distributed computability in byzantine asynchronous systems.
\newblock In {\em Proceedings of the 46th Annual ACM Symposium on Theory of
  Computing}, STOC '14, pages 704--713, New York, NY, USA, 2014. ACM.

\bibitem{neiger88}
Gil Neiger and Sam Toueg.
\newblock Automatically increasing the fault-tolerance of distributed systems.
\newblock In {\em PODC '88: Proceedings of the seventh annual ACM Symposium on
  Principles of distributed computing}, pages 248--262, New York, NY, USA,
  1988.

\bibitem{Agree80}
M.~Pease, R.~Shostak, and L.~Lamport.
\newblock Reaching agreement in the presence of faults.
\newblock {\em Journal of the ACM}, 27(2):228--234, Apr 1980.

\bibitem{FTState90}
F.~B. Schneider.
\newblock Implementing fault-tolerant services using the state machine
  approach: a tutorial.
\newblock {\em ACM Computing Surveys (CSUR)}, 22(4):299--319, 1990.

\bibitem{Srikanth87}
T.~K. Srikanth and Sam Toueg.
\newblock Simulating authenticated broadcasts to derive simple fault-tolerant
  algorithms.
\newblock {\em Distributed Computing}, 2(2):80--94, 1987.

\bibitem{VG13}
Nitin~H. Vaidya and Vijay~K. Garg.
\newblock Byzantine vector consensus in complete graphs.
\newblock In {\em Proceedings of the 2013 ACM Symposium on Principles of
  Distributed Computing}, PODC '13, pages 65--73, New York, NY, USA, 2013. ACM.

\end{thebibliography}
}
\end{document}